\def\R {\mathbb{R}}
\def\coth {\text{\rm coth}\,}
\def\csch {\text{\rm csch}\,}
\newtheorem{theo}{Theorem}
\newtheorem{prop}{Proposition}
\newtheorem{cor}{Corollary}
\newtheorem*{exe}{Example}
\newcommand{\secta}[1]{\section*{#1}\setcounter{section}{1}
                                    \setcounter{equation}{0}
\def\theequation{\Alph{section}.\arabic{equation}}}
\newcommand{\sectb}[1]{\section*{#1}\addtocounter{section}{1}
                                    \setcounter{equation}{0}
\def\theequation{\Alph{section}.\arabic{equation}}}
\newcommand{\sectc}[1]{\section*{#1}\addtocounter{section}{1}
                                    \setcounter{equation}{0}
\def\theequation{\Alph{section}.\arabic{equation}}}
\newcommand{\sectd}[1]{\section*{#1}\addtocounter{section}{1}
                                    \setcounter{equation}{0}
\def\theequation{\Alph{section}.\arabic{equation}}}
\newcommand{\secte}[1]{\section*{#1}\addtocounter{section}{1}
                                    \setcounter{equation}{0}
\def\theequation{\Alph{section}.\arabic{equation}}}
\newlength{\defbaselineskip}
\newcommand{\setlinespacing}[1]
    {\setlength{\baselineskip}{#1 \defbaselineskip}}
\begin{document}

\title{Effect of variation in density on the stability of bilinear shear currents with a free surface}
\author{Ricardo Barros$^{1,a)}$ and Jos\'{e} Felipe Voloch$^{2}$\\[6pt]     
\small $^1$ Department of Mathematical Sciences\\
\small Loughborough University\\
\small Loughborough LE11 3TU, UK\\
\small a) Electronic mail: r.barros@lboro.ac.uk\\ 
\\
\small $^2$ School of Mathematics and Statistics\\
\small University of Canterbury\\
\small Private Bag 4800, Christchurch 8140, New Zealand
}


\date{}

\maketitle

\begin{abstract}
We perform the stability analysis for a free surface fluid current modeled as two finite layers of constant vorticity, under the action of gravity and absence of surface tension. 
In the same spirit as Taylor [``Effect of variation in density on the stability of superposed streams of fluid," Proc.~R.~Soc.~A {\bf 132}, 499 (1931)], a geometrical approach to the problem is proposed, which allows us to present simple analytical criteria under which the flow is stable. 
A strong destabilizing effect of stratification in density is perceived when the results are compared with those obtained for the 
physical setting where the vorticity interface is also a density interface separating two immiscible fluids with constant densities. 
In contrast with the homogenous case, the stratified bilinear shear current is mostly unstable and can only be stabilized when the background current in the upper layer is constant.
\end{abstract}

\section{Introduction}
There has been an increased interest in examining the vorticity effects on nonlinear water waves since the contributions of Benjamin \cite{Benjamin} and Benney \cite{Benney}, based on weakly nonlinear theory. Although the need to account for strong nonlinearity in the description of surface gravity waves, especially when the shear current is strong, has long been recognized \cite{Teles_da_Silva_Peregrine}, little analytical progress has been done beyond the weakly nonlinear regime until rather recently. 

For finite-depth uniform shear currents, Choi \cite{Choi_2003} proposed a strongly nonlinear long wave model that captures the salient features revealed by the numerical studies for the Euler equations~\cite{Teles_da_Silva_Peregrine, Vanden_Broeck}. Namely, the fact that shear currents modify the shape of solitary waves, leading possibly to overturning waves, and the appearance of stationary recirculating eddies. Rigorous results on the existence of steady periodic waves for Euler equations with general vorticity distribution have since appeared in the literature, subsequent to the contributions by Constantin and Strauss \cite{Constantin_Strauss_2004, Constantin_Strauss_2011}.

Density stratified shear flows have likewise received considerable attention over the years. As for homogeneous fluids, most progress made beyond the weakly nonlinear theory developed by Benjamin~\cite{Benjamin_1966} and Grimshaw~\cite{Grimshaw} was, until recently, based on numerical studies for Euler equations. Modeling a stratified shear flow by approximating the velocity and density profiles by piecewise-linear and piecewise-constant functions, respectively, is a tempting and common approach due to its mathematical tractability. In particular, 
steady large-amplitude waves in a two-layer density-stratified flow with a bilinear shear current were extensively studied by Pullin and Grimshaw~\cite{Pullin_Grimshaw_1983,Pullin_Grimshaw_1986}, 
%
by assuming the fluid domain to be bounded by rigid boundaries. The influence of having in system the top rigid boundary replaced by a free surface was recently investigated by Curtis, Oliveras, and Morrison \cite{Curtis_et_al}.

In the aforementioned studies, the question of the stability of the background flow often lies unaddressed. As we know, hydrodynamic stability is an active field of research and stability of a flow may depend in a rather subtle way on the details of the velocity and density profiles. Stratification in density, often assumed to have a stabilizing effect, can also promote instability as early pointed out by Goldstein~\cite{Goldstein} and Taylor~\cite{Taylor}. It can also alter qualitatively the character of instability of shear flows, as clearly demonstrated by Holmboe~\cite{Holmboe}. 
Although instability can easily be established for all the classical configurations proposed by Taylor \cite{Taylor}, Goldstein \cite{Goldstein}, and Holmboe \cite{Holmboe}, the mathematics involved do not clarify the physical mechanisms leading to the shear instability. Taylor~\cite{Taylor} has proposed that instability in the so-called {\it Taylor-Goldstein configuration} could be a result of a resonant interaction between two interfacial gravity waves. The wave interaction interpretation of instability was greatly developed by many authors, including Baines \& Mitsudera~\cite{Baines_Mitsudera} and Caulfield~\cite{Caulfield}, and contributed to the understanding of some main results from the stability of stratified shear flows (see the excellent review paper by Carpenter, Tedford, Heifetz, and Lawrence~\cite{Carpenter_et_al} and references therein). However, boundary effects 
are often neglected in these descriptions, which may significantly affect the stability of the flow.    

For homogeneous fluids, attempts to extend Rayleigh's inflexion Theorem \cite{Drazin_Reid} to the free surface setting 
were made by Yih~\cite{Yih} and, more recently, Hur and Lin \cite{Hur_Lin} (see also the related paper by McHugh~\cite{McHugh}). Yih has stated that a monotonic profile for the background velocity with no inflexion points is stable. Hur and Lin claim to have extended Yih's result further (by relaxing the monotonicity assumption on the mean velocity), but were promptly disproved by Renardy and Renardy~\cite{Renardy_Renardy}. These works highlight, in particular, how sensitive the stability of a flow may be to the upper boundary condition. 

In stratified flows, the confinement by rigid boundaries and neglect of the Boussinesq approximation can also lead to surprising effects on the stability of the flow. Namely, all the classical configurations proposed by Taylor~\cite{Taylor}, Goldstein~\cite{Goldstein}, and Holmboe~\cite{Holmboe} can be stabilized at low Richardson number, as recently shown by Barros and Choi~\cite{Barros_Choi_2014}.

Here we consider the stability analysis of a free surface flow composed of two finite layers, each with a constant density and linear shear current. In this study, the mean horizontal velocity is assumed to be continuous, and viscosity and surface tension effects are neglected. To better understand the role of variation in density on the stability of flow, all results are compared in parallel with the corresponding homogenous case of a bilinear shear current. Following the steps of Taylor~\cite{Taylor}, and using some of the results in Barros and Choi~\cite{Barros_Choi_2014}, the analysis of the problem is performed with recourse to the theory of plane algebraic curves. One of the most pertinent aspects of the present work is that the geometrical approach leads to simple analytical criteria that are necessary and sufficient for the stability of the flow.
While the results are mathematically rigorous, no attempt is made to clarify in physical terms the reason why they are true.


%



\section{Formulation}\label{sec:formulation}

The stability of an inviscid, incompressible, stratified shear flow depends upon the vertical variation of density $\rho(z)$ and the mean horizontal velocity $U(z)$. The behavior of a small two-dimensional, monochromatic disturbance of wavenumber $k$ and wave speed $c$ is governed (see {\it e.g.} \cite{Drazin_Reid}) by
\begin{equation}\label{model}
\phi'' +\frac{\rho'}{\rho}\left (\phi'-{U'\over U-c}\phi\right )+ \left[-\frac{g\,\rho'}{\rho\, (U-c)^2}  - \frac{U''}{U-c} - k^2 \right] \phi=0,
\end{equation}
where the prime indicates differentiation with respect to $z$, $g$ is the gravitational acceleration, and $\phi$ is the complex amplitude of the stream function 
$\psi$ defined by $\psi (x,z,t) = \phi(z) \exp [i k(x-ct)]$ at each point $(x,z)$ and time $t$. The wave speed $c$ may be complex, and such wave is said to be unstable if ${\rm Im}(c)>0$. 

In this study, piecewise linear velocity and piecewise constant density profiles are adopted (see Fig.~\ref{sketch}$(a)$):
\begin{equation}\label{physical_configuration}
U(z) = \left\{ 
\begin{array}{ccl}
u_0+\Omega_2 z & \mbox{if } & 0<z<H_2\\
u_0 +\Omega_1 z & \mbox{if } & -H_1<z < 0\\
\end{array}  
\right.
, \quad 
\rho(z) = \left\{ 
\begin{array}{rcl}
\rho_2 & \mbox{if } & 0<z<H_2\\
\rho_1 & \mbox{if } & -H_1<z < 0\\
\end{array}  
\right. .
\end{equation}
If the flow were to be confined between two rigid walls, it is well known that the flow is stable regardless of the physical parameters. The linear dispersion relation is given by a quadratic equation for the wave speed $c$, which has always two real roots \cite{Choi_2006}. Here we investigate the effects on the stability features of the flow caused by the presence of a top free surface.  
Furthermore, to isolate the influence of variation in density on the stability of the flow, comparison with the results for a homogeneous fluid (see Fig.~\ref{sketch}$(b)$) will be presented.

In each subdomain $-H_1<z<0$ or $0<z<H_2$, equation \eqref{model} can be solved explicitly as:
$$
\phi(z) = \left\{
\begin{array}{ccl}
A_2 \,e^{kz} + B_2 \,e^{-kz} & \mbox{if } & 0<z<H_2\\
A_1 \,e^{kz} + B_1\, e^{-kz} & \mbox{if } & -H_1<z<0\\
\end{array}
\right. ,
$$
for arbitrary constants $A_1, A_2, B_1, B_2$.
Then, at the level $z=0$, where $\rho (z)$ is discontinuous, the continuity of pressure and normal velocity at this  surface requires the following jump conditions
\begin{equation}\label{1st_jump_condition}
\left\llbracket  \rho \left[ (U-c) \, \phi^\prime - \left( U^\prime + \frac{g}{U-c} \right) \phi \right]  \right\rrbracket =0,\qquad
\left\llbracket \phi \right\rrbracket =0,
\end{equation}
respectively. Here we have used $\llbracket \cdot \rrbracket$ to denote a jump across the interface. Along with these, the following boundary conditions are imposed, respectively, at the top and bottom of the flow domain:
\begin{equation}\label{boundary_conditions}
\phi^\prime (H_2)- \left[ \frac{g}{(U(H_2)-c)^2} + \frac{U^\prime(H_2)}{U(H_2)-c}\right] \phi(H_2)=0, \quad \phi(-H_1)=0. 
\end{equation}

\begin{figure}[htbp]
\begin{center}
\includegraphics*[width=430pt]{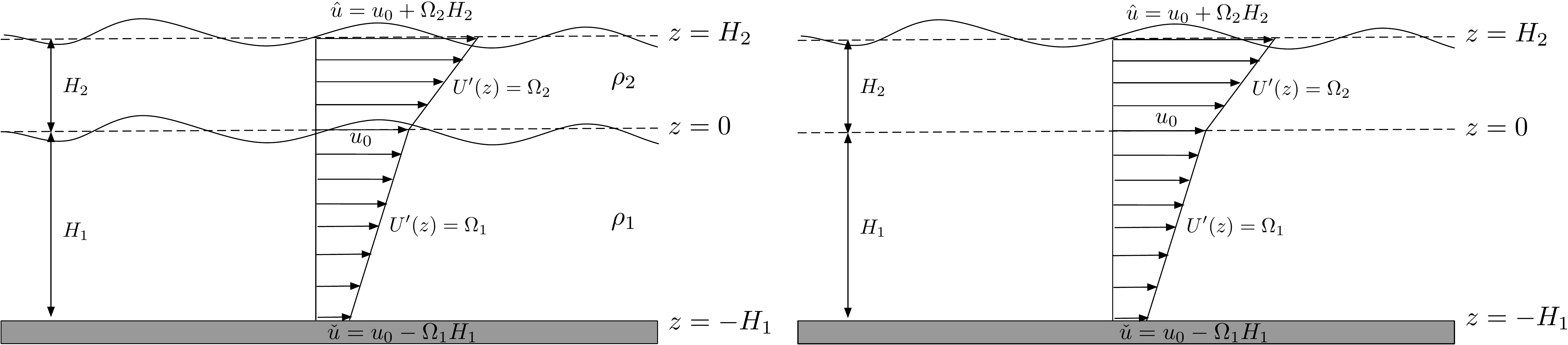}
\\
$(a)$ \hspace{4cm} $(b)$
\end{center}
\caption{Schematic of the physical configurations examined in this paper. Panel $(a)$ illustrates the stratified bilinear shear current defined in \eqref{physical_configuration}. The corresponding homogeneous bilinear shear current ($\rho_2 = \rho_1$) is illustrated in panel $(b)$.
\label{sketch}
}
\end{figure}


Equations \eqref{1st_jump_condition}--\eqref{boundary_conditions} yield a linear system composed by 4 equations for 4 unknowns. Following Taylor (see pp.~509-511 in \cite{Taylor}), the system can be easily reduced to two equations: 
$$
- k \,\csch(kH_2) (\hat{u}-c)^2 \,c_1 + \left[ k \,\coth(kH_2) (\hat{u}-c)^2 - \Omega_2 (\hat{u}-c)-g \right] c_2=0, 
$$
\begin{multline}
\Big\{\rho_2 \left[ k \,\coth(kH_2) (u_0-c)^2+ \Omega_2 (u_0-c)+g\right] + \rho_1 \left[ k \,\coth(kH_1) (u_0-c)^2 - \Omega_1 (u_0-c)-g\right] \Big\} c_1 - \\
-\rho_2 \,k \,\csch(kH_2) (u_0-c)^2 \, c_2=0,
\end{multline}
with $c_1 \equiv A_2 + B_2$ and $c_2 \equiv A_2 \,e^{kH_2}+B_2 \,e^{-kH_2}$,
from which it follows that the dispersion relation between the wave speed $c$ and the wavenumber $k$ is obtained as a polynomial equation (of degree 4) for $c$:
\begin{multline}\label{dispersion_relation}
\Big\{\rho_1 \left[ k\,\coth(kH_1) \,(u_0-c)^2 - \Omega_1 (u_0-c)-g \right] + \rho_2 \left[ k \,\coth(kH_2) \,(u_0-c)^2+ \Omega_2 (u_0-c)+g \right] \Big\}  \times \\
\times \left[ k \,\coth(kH_2) \,(\hat{u}-c)^2 - \Omega_2 (\hat{u}-c)-g \right] =\rho_2 \,k^2 \,\csch^2(kH_2) \,(u_0-c)^2 (\hat{u}-c)^2.
\end{multline}
Here, $u_0$ and $\hat{u}$ are simply the velocities at $z=0$ and $z=H_2$, respectively. 
While formulating the problem is straightforward, fully describing the stability features of the flow is rather difficult in practice due to the large number of physical parameters in the problem. 
To avoid this difficulty, following Taylor~\cite{Taylor}, Chandrasekhar (see \S 105 in \cite{Chandrasekhar}), and Ovsyannikov~\cite{Ovsyannikov}, we adopt a geometrical approach based on the theory of plane algebraic curves. With introducing variables $p$ and $q$ defined by 
\begin{equation}\label{p_q_definition}
p=(u_0-c)/\sqrt{gH_1}, \quad q=(\hat{u}-c)/\sqrt{gH_1},
\end{equation}
the eigenvalue equation \eqref{dispersion_relation} for $c$ can then be written as:
\begin{equation}\label{curve}
\left[ H(\beta_1 \, p^2 -\widetilde{\Omega}_1 p-1)+ \rho \left( \beta_2 \,p^2+ p(q-p)+H \right) \right]  \left( \beta_2 \,q^2 - q(q-p)-H \right)  = \rho  \beta_3 \, p^2 q^2.
\end{equation}
The coefficients $\beta_i$ in the expression are given by
$$
\beta_1 = \alpha \, \coth \alpha, \quad \beta_2 = H \alpha \,\coth (H \alpha), \quad \beta_3 = H^2 \alpha^2 \,\csch^2 (H \alpha),
$$
where $\alpha=kH_1$ is the dimensionless wavenumber, $\rho=\rho_2/\rho_1$ is the density ratio, $H=H_2/H_1$ is the depth ratio, and  $\widetilde{\Omega}_1 = \Omega_1 \sqrt{H_1/g}$ is the dimensionless vorticity in the lower layer. So long as $p$ and $q$ are real, equation \eqref{curve} defines an algebraic curve of degree 4 in the $(p,q)$-plane that we will designate by $P(p,q)=0$.
Hereafter, we assume that $\alpha$ and $H$ are strictly positive, and $0<\rho<1$, unless clearly stated otherwise, under which $\beta_1>1$, $\beta_2>1$, and $0<\beta_3 < 1$.

We remark that in \eqref{curve} the vorticity in the upper layer does not appear as a parameter. This is achieved by noticing that 
$$
\Omega_2 = [(\hat{u}-c)- (u_0-c)]/H_2,
$$
or  
\begin{equation}\label{Omega2_of_p_q}
\Omega_2 = (q-p) \frac{\sqrt{g H_1}}{H_2},
\end{equation}
which can be directly inserted into \eqref{dispersion_relation}, along with \eqref{p_q_definition}, to yield \eqref{curve}.
For a given difference of the velocities $\hat{u}$ and $u_0$, it follows from \eqref{p_q_definition} that $p$ and $q$ must lie on the straight line
\begin{equation}\label{relationship_p_q}
q = p + F,
\end{equation}
where $F$ stands for the Froude number defined as $F= (\hat{u}-u_0)/\sqrt{gH_1}$. The Froude number is related to the vorticity in the upper layer through $F= H \widetilde{\Omega}_2$, with $\widetilde{\Omega}_2 = \Omega_2 \sqrt{H_1/g}$. As such, large (small) values of $F$ can be thought of as large (small) values of $\widetilde{\Omega}_2$.
To find whether stable modes are possible for a given value of $F$, we need only to determine whether the corresponding line \eqref{relationship_p_q} has any intersections with the real part of the $(p,q)$-locus described by \eqref{curve}, which we have designated by the plane algebraic curve $P(p,q)=0$.
It follows from definition \eqref{p_q_definition} that each real root of Eq.~\eqref{dispersion_relation} will correspond to an intersection point between the curve $P(p,q)=0$ and the line of equation \eqref{relationship_p_q}.
Given that \eqref{dispersion_relation} is a quartic equation for $c$,
the flow is stable if the line \eqref{relationship_p_q} intersects curve \eqref{curve} at four points, and unstable otherwise.
This algebraic approach allows us to easily enumerate the real wave speeds for the eigenvalue problem. For convenience, hereafter, when referring to a straight line or an algebraic curve (of degree greater than 1), we will adopt the terminology {\it line} and {\it curve}, respectively.

To compare our results with those obtained for a homogeneous fluid, it suffices to consider in the analysis the limit when $\rho \rightarrow 1$. The curve \eqref{curve} becomes degenerate and splits up into a vertical line $p=0$ and a cubic curve. The vertical line is clearly a spurious solution, resulting from the fact that there is no longer a density interface, but simply a vorticity interface. The study reduces then to the analysis of an algebraic curve of degree 3. If, when intersected by the line \eqref{relationship_p_q}, three points are obtained, the homogeneous flow is stable. Otherwise, it is unstable.

As we will see, some departures from Eq.~\eqref{curve} will be made throughout the text. 
While the particular form in \eqref{curve} is convenient to establish the stability of long waves (see \ref{sec:longwave_limit}) and rule out the existence of four complex solutions to the eigenvalue equation \eqref{dispersion_relation} (see Appendix A), classifying its singular points, key to distinguish the possible configurations for the family of curves, is impractical. Equivalent formulations will be proposed with certain symmetry properties that will allow us to overcome this issue.



\section{Some stability results}

We start by considering a few examples where the stability analysis can 
be considerably simplified. First, we examine the long wave limit when $\alpha \rightarrow 0$, followed by the case when the background current in the upper layer is constant.
In both cases the algebraic curve describing the linear stability problem becomes the graph of a real-valued function and stability is readily established. Second,  we examine the cases of a constant background current in the lower layer, and the case of uniform vorticity. In the two cases, algebraic curves, symmetric about the origin, will be proposed to describe the linear stability problem. As we will see, fully classifying the curve configurations becomes accessible, and instability is proved to hold at least for a finite range of Froude numbers.

\subsection{Stable configurations}

\subsubsection{Long wave limit}\label{sec:longwave_limit}

\begin{figure}[htbp]
\begin{center}
\includegraphics*[width=340pt]{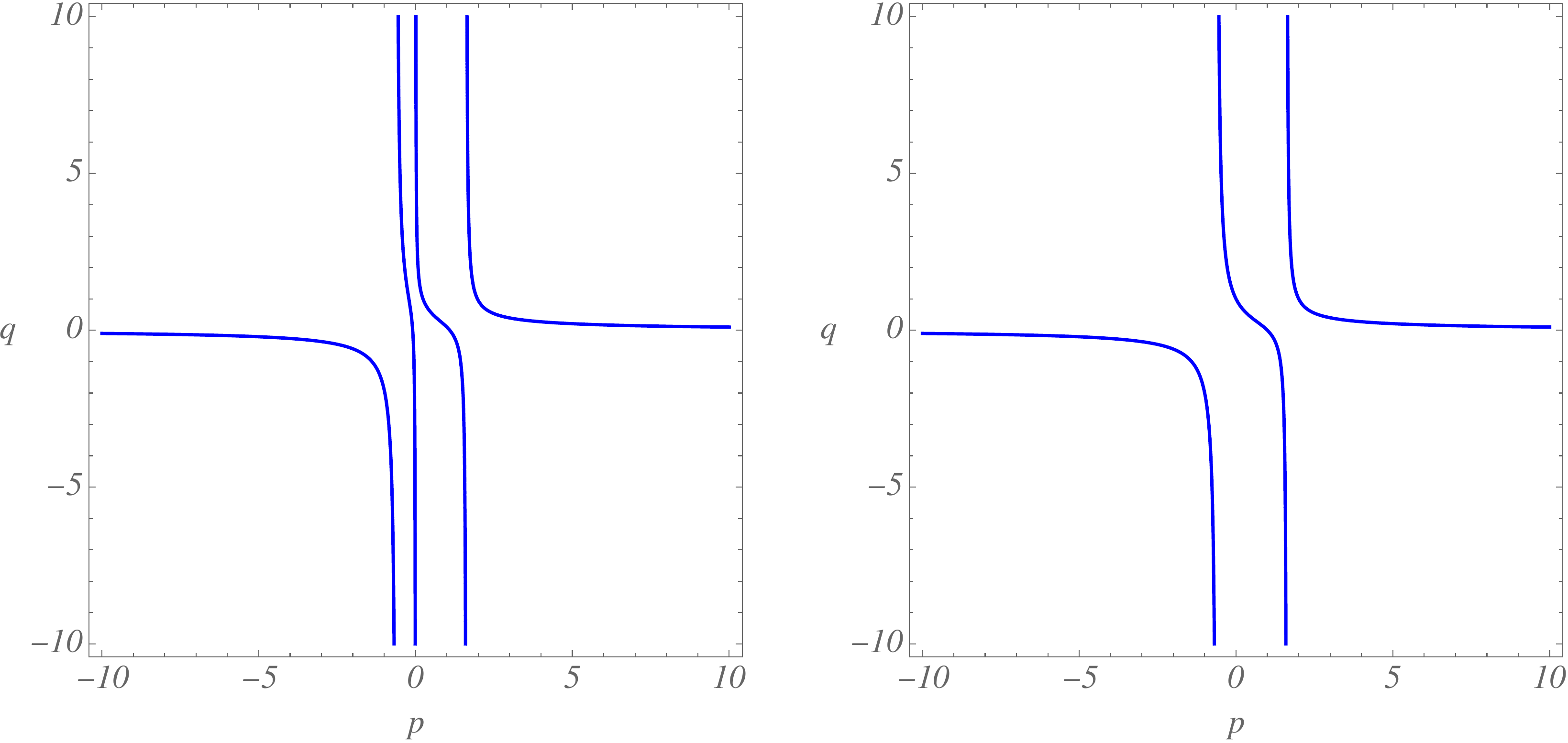}
\\
$(a)$ \hspace{4cm} $(b)$
\end{center}
\caption{$(a)$ Plot on the $(p, q)$-plane of the long wave limit curve defined by Eq.~\eqref{lwlimit_curve} for $\rho=0.9$, $H=1$, $\widetilde{\Omega}_1=1$. Panel $(b)$ illustrates the corresponding plot obtained for a homogeneous fluid ($\rho \rightarrow 1$).
\label{plot_lwlimit}
}
\end{figure}


In the long wave limit when $\alpha \rightarrow 0$, the curve \eqref{curve} reduces to 
\begin{equation}\label{lwlimit_curve}
(p^2-\widetilde{\Omega}_1 p-1)p q-H(p^2-\widetilde{\Omega}_1 p + \rho-1)=0,
\end{equation}
which can be parameterized as 
$$
q= \frac{H (p^2-\widetilde{\Omega}_1 p -1 +\rho)}{p (p^2-\widetilde{\Omega}_1 p-1)}.
$$
The curve is thus simply the graph of a real-valued function. It is straightforward to show the existence of three distinct vertical asymptotes with equations $p=p_i$ ($i=1,2,3$), and one horizontal asymptote $q=0$. Moreover, at each one of the points $p_i$, we have $\lim_{p\rightarrow p_i^{\pm}} q (p) = \pm \infty$. As a result, any line with slope one, as in \eqref{relationship_p_q}, intersects the curve at four distinct points (see Fig.~\ref{plot_lwlimit}$(a)$), and stability holds.

The same analysis conducted for a homogeneous fluid yields a curve that can be parameterized as
$q= H (p-\widetilde{\Omega}_1) / (p^2-\widetilde{\Omega}_1 p-1)$, for which stability clearly holds (see Fig.~\ref{plot_lwlimit}$(b)$), in agreement with the findings of Ref.~\cite{Chesnokov_et_al}. 
A stratification in density has therefore no effect on the long wave stability of the flow.

\subsubsection{Case of a constant background current in the upper layer ($\Omega_2=0$)}

\begin{figure}[htbp]
\begin{center}
\includegraphics*[width=340pt]{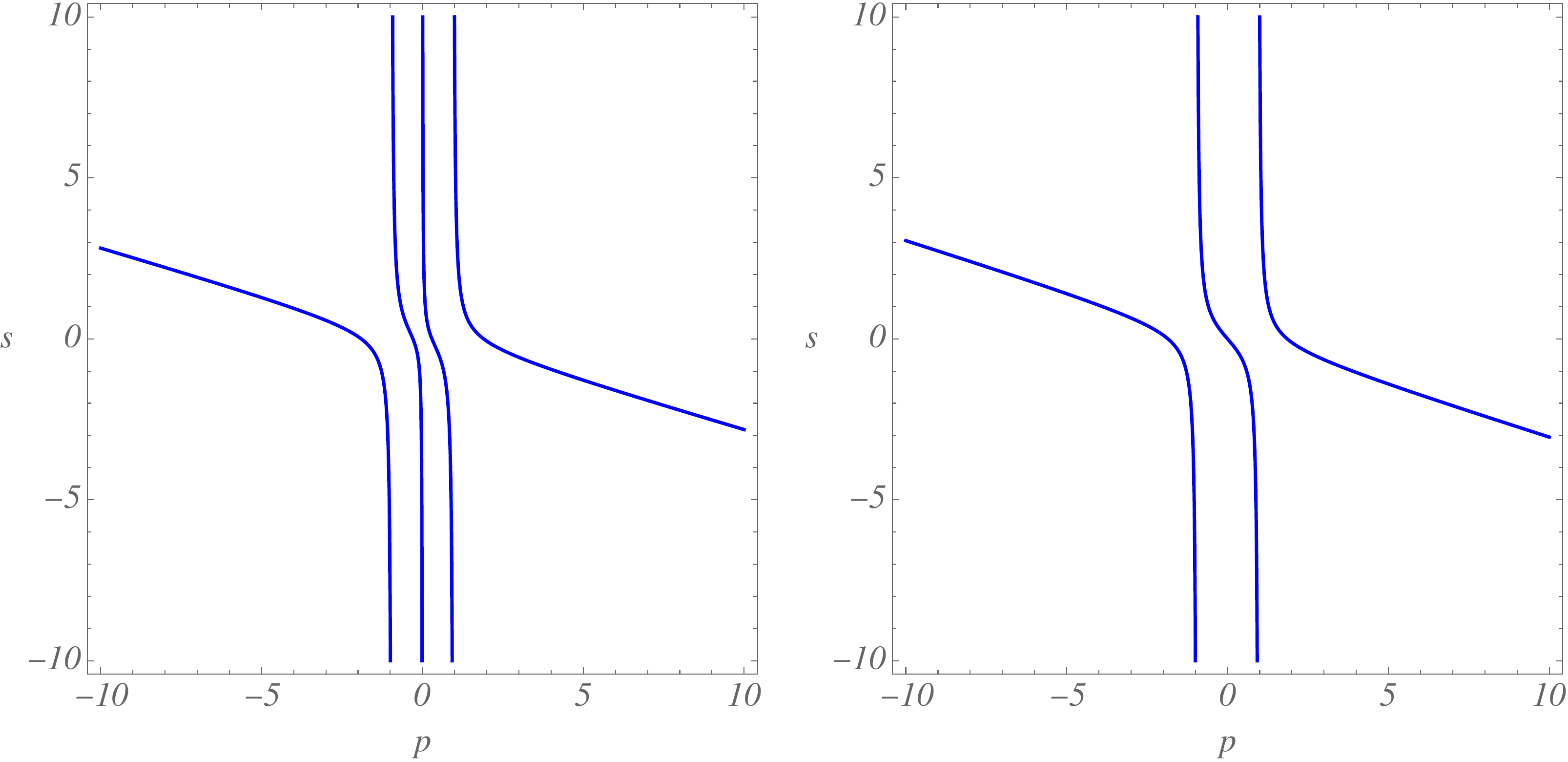}
\\
$(a)$ \hspace{4cm} $(b)$
\end{center}
\caption{$(a)$ Plot on the $(p, s)$-plane of the curve defined by Eq.~\eqref{curve_irrot_upper_layer} for $\rho=0.9$, $H=1$, $\alpha=0.5$. Panel $(b)$ illustrates the corresponding plot obtained for a homogeneous fluid ($\rho \rightarrow 1$).
\label{plot_irrot_upper_layer}
}
\end{figure}


As previously mentioned, the vorticity in the upper layer does not appear as a parameter in the curve equation \eqref{curve}. For this reason, to examine the case when the vorticity in the upper layer vanishes, we need to go back to the original eigenvalue equation \eqref{dispersion_relation}. If the condition $\Omega_2=0$ is inserted directly into \eqref{dispersion_relation}, one obtains a polynomial equation of degree $4$ in terms of the single variable $(u_0-c)$, since the background velocity $\hat{u}$ at the undisturbed level $z=H_2$ is identical to the velocity $u_0$ at $z=0$.
We next introduce new variables to take advantage of the geometrical formulation.
Let $\check{u}$ denote the background velocity at the bottom $z=-H_1$. Then, we have 
$$
\Omega_1 = [(\hat{u}-c)- (\check{u}-c)]/H_1,
$$
thus a natural step to take would be introducing a new variable $s$ defined by 
$$
s=(\check{u}-c)/{\sqrt{gH_1}},
$$
and casting the eigenvalue equation into the form
\begin{equation*}
\big[ \alpha\,\coth \alpha \,p^2 - (p-s) p -1 + \rho \left( \alpha \,\coth(H\alpha) \,p^2+1 \right) \big]
\left[ \alpha \,\coth(H\alpha) \,p^2 -1 \right] =\rho \,\alpha^2 \,\csch^2(H \alpha) \,p^4,
\end{equation*}
or equivalently:
\begin{equation}\label{curve_irrot_upper_layer}
\big[ \beta_1 \,p^2 - (p-s) p -1 \big] \left( \beta_2 \, p^2 -H \right) = \rho H \left(1-\alpha^2 \,p^4 \right),
\end{equation}
by using the identity $\csch^2 (H\alpha) = \coth^2 (H\alpha)-1$. In turn, this curve can be parameterized as
$$
s=-\frac{A}{\beta_2} \,p + \frac{B\,p^2 + \beta_2 H (1-\rho)}{\beta_2 \,p \,(H-\beta_2 p^2)},
$$
where $A$ and $B$ are given by
$$
A= \beta_2 (\beta_1-1)+ \rho H \alpha^2, \quad B = \rho H^2 \alpha^2 - \beta_2^2.  
$$

It readily follows the existence of three vertical asymptotes with equations $p=p_i$ ($i=1,2,3$) and one oblique asymptote with negative slope ($-A/\beta_2<0$). 
Moreover, for each one of the points $p_i$ it can be shown that $\lim_{p\rightarrow p_i^{\pm}} q (p) = \pm \infty$. As a consequence, any line with equation $s=p-\widetilde{\Omega}_1$ intersects the curve at four distinct points and stability holds (see Fig.~\ref{plot_irrot_upper_layer}$(a)$). Similarly as above, it can be verified that lack of stratification in density has no effect on the stability of the flow, as illustrated in Fig.~\ref{plot_irrot_upper_layer}$(b)$.  

In Ref.~\cite{Curtis_et_al}, it is claimed that the suppression of instability for a weak upper layer shear is a generic feature. While we can confirm that instability is suppressed when the vorticity in the upper layer vanishes, we will disprove their assertion by showing that in any other case the flow is unstable (see \S~\ref{sec:anal_gen_case}).

\subsection{Unstable configurations}

\subsubsection{Case of a constant background current in the lower layer ($\Omega_1=0$)}\label{sec:omega_1_is_0}

It is well known that when a homogeneous fluid current is modeled as a finite layer of constant vorticity above a finite stagnant region, and surface tension is neglected, the flow is unstable (see e.g. \cite{Miles,Caponi_et_al}). Here, we investigate whether stratification in density can stabilize the flow. 
\begin{figure}[htbp]
\begin{center}
\includegraphics*[width=340pt]{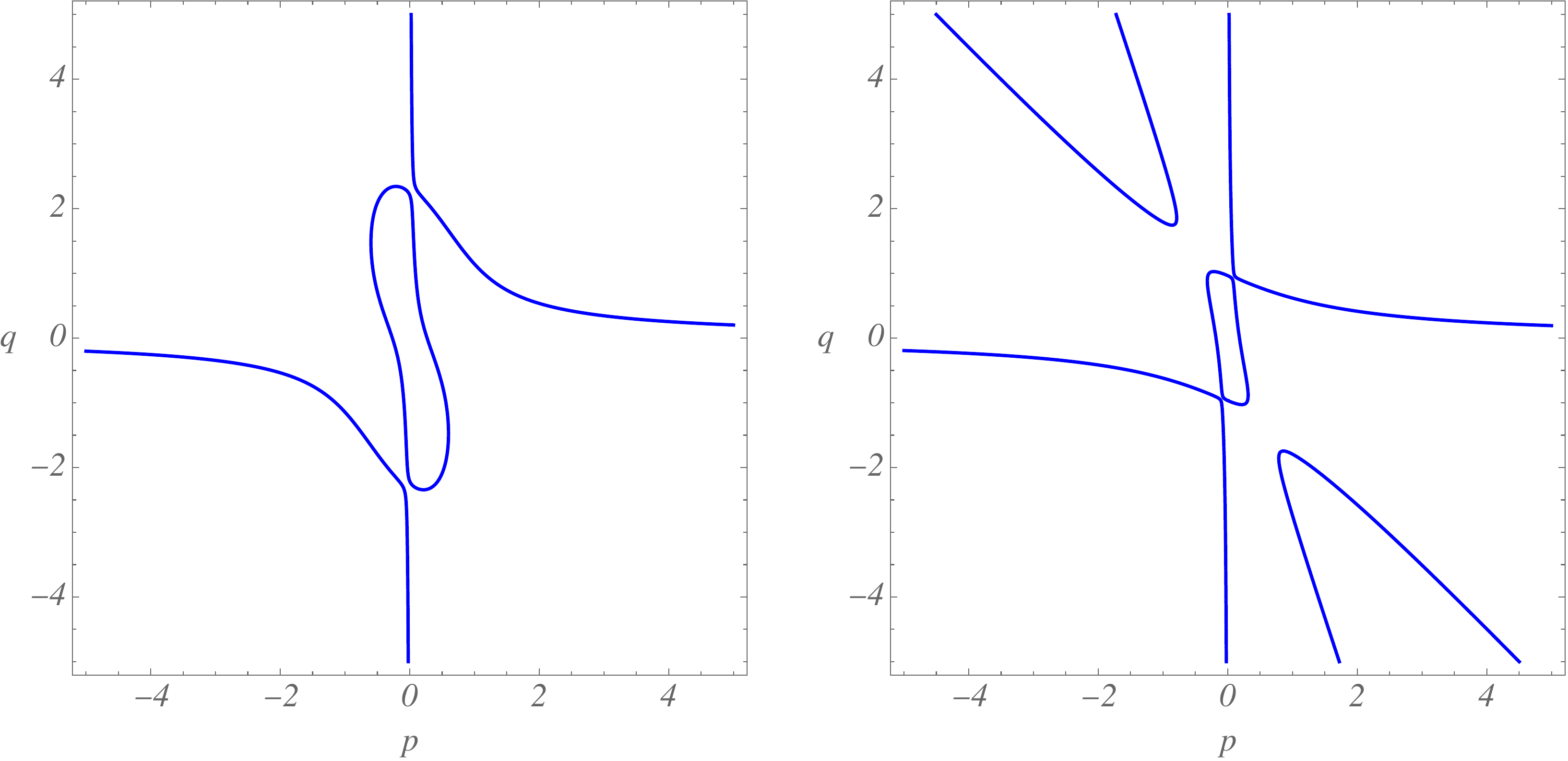}
\\
$(a)$ \hspace{4cm} $(b)$
\end{center}
\caption{Plots on the $(p, q)$-plane of the curve defined by Eq.~\eqref{curve} for $\rho=0.9$, $H=1$, $\widetilde{\Omega}_1=0$.
$(a)$ $\alpha =0.8$. $(b)$ $\alpha =2.0$. For these values of $\rho$, $H$, and $\widetilde{\Omega}_1$, when the value of $\alpha$ is gradually increased, a transition between the two possible curve configurations occurs at $\alpha \approx 1.77699$ (see Fig.~\ref{transition_configurations_irrotational_lower_layer}).
\label{plot_irrotational_lower_layer}
}
\end{figure}


To determine the number of intersections between the curve \eqref{curve} and the line \eqref{relationship_p_q} as 
the Froude number increases, the behavior of an algebraic curve at infinity
is important as the $q$-intercept of the line has precisely the value $F$.

It is convenient to express the curve equation \eqref{curve} as
$P(p,q) \equiv \sum_{k=0}^{4} P_k(p,q)=0$, where $P_k (p,q)$ is a homogeneous polynomial in $p$ and $q$ of degree $k$, since the factors of the highest degree polynomial define the slopes of the asymptotes to the curve. When vorticity is absent in the lower layer ($\widetilde{\Omega}_1=0$), the curve \eqref{curve} satisfies $P(-p,-q)=P(p,q)$, {\it i.e,} the curve $P(p,q)=0$ is symmetric about the origin, and no odd degree polynomials show up in the curve equation. This allows us to use the following  results (Primrose~\cite{Primrose}, Theorem 2, pp.~7--8): $(i)$ {\it any simple factor $ap+bq$ of $P_4(p,q)$ will have associated an asymptote to the curve, defined by the equation $ap+bq =0$; $(ii)$ if $ap+bq$ is a repeated factor of $P_4(p,q)$, so that $P_4(p,q)=(ap+bq)^2 Q(p,q)$, then it will have associated at most two possible asymptotes $ap+bq=t_0$, where $t_0$ is a real root of $Q(b,-a) \,t^2 +P_2(b,-a)=0$}. 
In both cases, using homogeneous coordinates, $(b,-a,0)$ is a point at infinity. However, only in $(ii)$ the point $(b,-a,0)$ is a singular point. As a result, nonlinear branches at infinity do not exist and the behavior of the curve at infinity is hence  completely described by its asymptotes (see Appendix C in \cite{Barros_Choi_2014}).

Here, the highest degree polynomial $P_4 (p, q)$ is given by
\begin{equation}\label{P4_exp}
P_4 (p,q)= pq \Big[ (H \beta_1 +\rho (\beta_2 -1)) p^2 + (-H\beta_1 + H \beta_1 \beta_2 + 2\rho- 2\rho \beta_2+ \rho \beta_2^2 - \rho \beta_3) \,pq + \rho (\beta_2-1) q^2 \Big], 
\end{equation}
from which it is found that, regardless of parameters used, the curve has two asymptotes: $p = 0$ and $q = 0$. 
To examine the existence of additional asymptotes to the curve, we have to take into account the terms in the bracket in \eqref{P4_exp} and consider the following quadratic equation for $v=p/q$:
\begin{equation}\label{quadratic_eq_for_v}
(H \beta_1 +\rho (\beta_2 -1)) v^2  + \left( H \beta_1 (\beta_2-1)+\rho(\beta_2 -1)^2 + \rho (1-\beta_3) \right)\,v + \rho (\beta_2-1)=0.
\end{equation}
The discriminant of this quadratic equation fully determines the number of real roots and, when positive, two extra asymptotes to the curve exist. If so, their slopes must be negative, since the coefficients in \eqref{quadratic_eq_for_v} are all positive.
Incidentally, for this example, it may noted that 
when the discriminant vanishes, {\it i.e.}, $P_4$ has a repeated factor, there are no extra asymptotes to the curve.  

When $\alpha$ is small ($0<\alpha \ll 1$), we can consider the expansions: 
\begin{equation}\label{betas_expansions}
\beta_1 = 1 + \frac{1}{3} \alpha^2 + O(\alpha^4), \quad \beta_1 = 2 + \frac{1}{3} H^2 \alpha^2 + O(\alpha^4), \quad \beta_3 = 1 - \frac{1}{3} H^2 \alpha^2 + O(\alpha^4),
\end{equation}
under which the discriminant may be approximated by $(-4/3) \rho H^3 \alpha^2 + O(\alpha^4)$, thus negative.
On the other hand, for large values of $\alpha$ ($\alpha \gg 1$), {\it i.e.,} for short-length waves, the discriminant is well approximated by 
$H^2 \alpha ^2 (\rho H \alpha + H \alpha -2 \rho -1)^2$, thus positive. 
\begin{figure}[htbp]
\begin{center}
\includegraphics*[width=170pt]{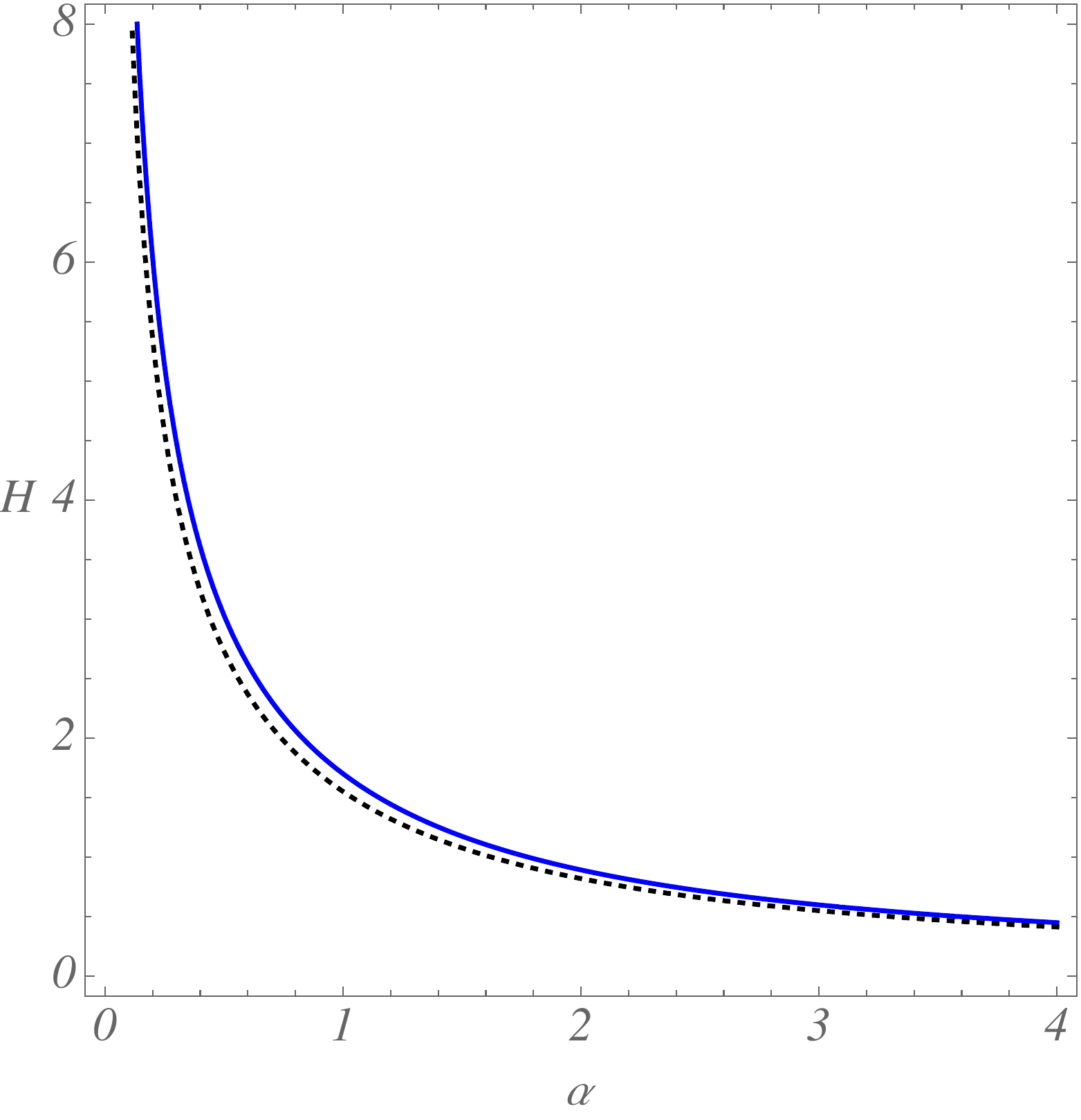}
\end{center}
\caption{Location in the parameter region where a transition between the two configurations for the family of curves defined by Eq.~\eqref{curve}, with $\widetilde{\Omega}_1=0$, occurs. The transition is predicted by the vanishing of the discriminant for the quadratic \eqref{quadratic_eq_for_v}. Different values of $\rho$ are considered: $\rho=0.9$ (full line); $\rho = 0.6$ (dotted line). 
\label{transition_configurations_irrotational_lower_layer}
}
\end{figure}
This suggests the existence of, at least, two different configurations for our plane algebraic curve, as shown in Fig.~\ref{plot_irrotational_lower_layer}. Notice that not only the behavior at infinity has changed, but also the topological structure of the curve. A classical result in Algebraic Geometry asserts that {\it as we vary the parameters, the topological structure of the curve changes only when the coefficients pass through values for which the curve has a singularity} (see Lemma 1 in \cite{Petrowsky}). 
Singular points at infinity exist whenever $P_4$ has a repeated factor, {\it i.e.} when the discriminant of \eqref{quadratic_eq_for_v} vanishes. 
We now show that  {\it finite singular points} (solutions of $P=0$, $P_p=0$, and $P_q=0$) cannot exist for the particular curves under consideration. 
Following Barros and Choi~\cite{Barros_Choi_2014}, we wish to prove that the system  
\begin{equation}\label{cand_singular_pts}
2 P_0 P_{4,p} - P_2 P_{2,p} =0, \quad 2 P_0 P_{4,q} - P_2 P_{2,q} =0, 
\end{equation}
has no solutions. Here, the subscripts $p$ and $q$ denote partial differentiation with respect to $p$ and $q$, respectively. Given that \eqref{cand_singular_pts} constitutes a system of two homogeneous polynomials of degree 3 in the variables p and q,  we may introduce $v=p/q$ to write a system of two cubic polynomials in $v$. The cubics have a common root if the {\it resultant} vanishes \cite{Prasolov}. Thus, finite singular points exist provided
$$
64 (\beta_2-1)^2 \beta_3^2 (\rho -1)^4 \rho ^2 \left(\rho (\beta_2-1) +H \beta_1\right)^2 (a_0 \rho^4 + a_1 \rho^3 + a_2 \rho^2+a_3 \rho +a_4) =0,
$$
with coefficients $a_i$ ($i=0,1,2,3,4$) depending on the parameters $H$ and $\alpha$. 
The quartic in $\rho$ is, however, identical to the term $e_4$ in the expression of the function $G$ defined in \eqref{quartic_for_lambda}, which can never vanish (see Appendix D), thus the result follows.


It is important to emphasize that there could be, in principle, other curve configurations leading to different stability properties of the flow, whilst preserving the same topological structure (see Example in Appendix E). This circumstance can, however, be ruled out by examining the possible arrangements of the curve branches connecting the asymptotes (see further details in Appendix A).
As a consequence,  
when $\widetilde{\Omega}_1=0$, the family of curves defined by \eqref{curve} has precisely two distinct configurations. For fixed values of $\rho$ and $H$, the transition between the two configurations occurs only once (see Fig.~\ref{transition_configurations_irrotational_lower_layer}), and at the expense of a singular point at infinity. 
In the example depicted in Fig.~\ref{plot_irrotational_lower_layer}, with $\rho=0.9$, $H=1$, when the value of $\alpha$ is gradually increased, a transition between a curve with two asymptotes and a curve with four asymptotes will take place at $\alpha\approx 1.77699$.

Every curve in the family cuts the axes at four points close to the origin, belonging to the contour represented in Fig.~\ref{plot_irrotational_lower_layer} by an oval, which (for fixed values of $\rho$, $H$, and $\alpha$) guarantees the stability of the flow for small values of $F$ (or $\widetilde{\Omega}_2$). In addition, the presence of two extra asymptotes as in Fig.~\ref{plot_irrotational_lower_layer}(b) assures the stability for large values of $F$. In either case, instability occurs at least for intermediate values of Froude numbers.
\begin{figure}[htbp]
\begin{center}
\includegraphics*[width=160pt]{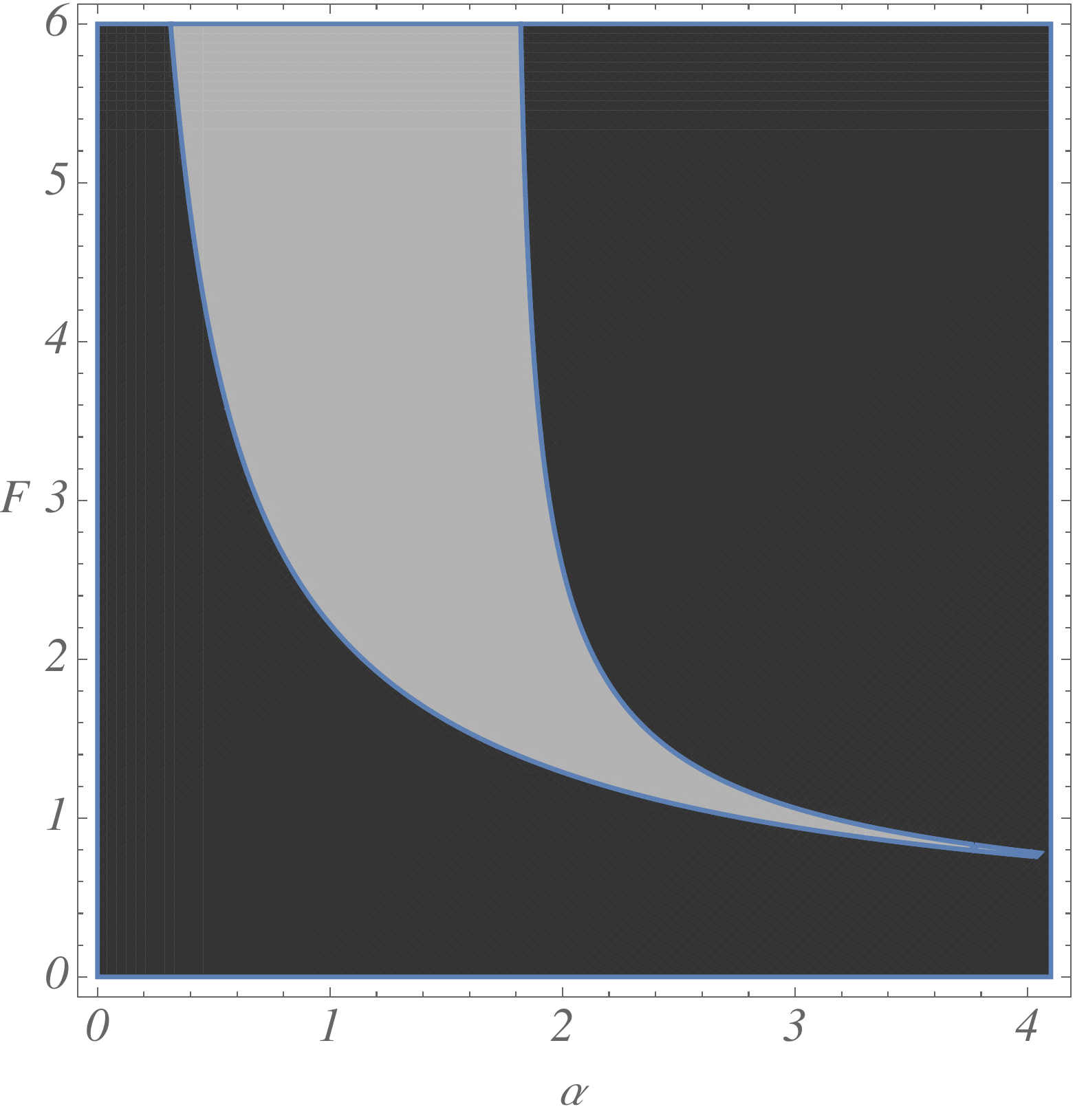}
\end{center}
\caption{Stability diagram on the $(\alpha, F)$-plane for the case of a constant background current in the lower layer, prescribed by Eq.~\eqref{curve} with $\rho=0.9$, $H=1$, and $\widetilde{\Omega}_1=0$. The dark shaded region corresponds to a stable region with four real roots and the light shaded region corresponds to the unstable region with two complex and two real roots. Hereafter, the same color scheme will be adopted. Only positive values of $F$ are considered, since symmetry of the curve about the origin implies that the stability diagram is symmetric with respect to the $\alpha$-axis. This same property holds for each case considered in this paper.
\label{stability_diagram_irrotational_lower_layer}
}
\end{figure}


The geometrical approach presented here is, of course, equivalent to the standard analytical approach where a quartic equation for the phase velocity $c$, or equivalently for $p$, is obtained by substituting \eqref{relationship_p_q} into \eqref{curve}. Then, using Fuller's root location criteria \cite{Fuller,Jury_Mansour}, stability diagrams on the $(\alpha,F)$-plane can be drawn. 
As shown in Fig.~\ref{stability_diagram_irrotational_lower_layer}, for any fixed non-zero value of $F$, there is always a finite range of wavenumbers for which a single unstable wave appears. Furthermore, the band of wavenumber for instability becomes narrower as $F$ decreases. The instability of the corresponding homogeneous fluid cannot therefore be suppressed by stratification in density, as it is insufficient to stabilise short-length waves ($\alpha\gg1$), even at very small (non-zero) values of $F$.

It is important to emphasize that, although this two-layer physical setting can formally be obtained from the three-layer configuration known as the Taylor-Goldstein configuration (see \cite{Taylor, Goldstein}), in the limit when the density of the top layer tends to zero, the stability results obtained here differ substantially from those for the three-layer flow where all densities are strictly positive. In particular, this two-layer shear current cannot be stabilized for sufficiently large difference in the streaming velocities $\hat{u}$ and $u_0$, unlike the Taylor-Goldstein configuration (see \S 3 in \cite{Barros_Choi_2014}).

\subsubsection{Case of uniform vorticity ($\Omega_1=\Omega_2$)}\label{sec:uniform_vorticity}

It is elementary to show that a homogeneous linear shear current is stable. The dispersion relation is found as a quadratic equation for the wave speed $c$, whose roots are always real. 
We will see here that stratification has a strong destabilizing effect on the flow, in the sense that any two-layer stratified flow with non-zero uniform vorticity is unstable.    

Since the vorticity in the upper layer is not a parameter for the family of curves defined by \eqref{curve}, in order to address this particular case, we are required to go back to \eqref{dispersion_relation}. We are then faced with two alternatives. We could simply substitute $\Omega_2$ by $\Omega_1$ in \eqref{dispersion_relation} and proceed as in \S~\ref{sec:formulation} to obtain a similar plane algebraic curve to the one given in \eqref{curve}. Such curve, however, has no symmetry properties that allow us to easily study its singularities. An alternative to this would be using the fact that $\widetilde{\Omega}_1= \widetilde{\Omega}_2$ to insert \eqref{Omega2_of_p_q} directly into \eqref{curve}, yielding the following new family of curves:
\begin{equation}\label{curve_constant_vorticity}
\left[ H \beta_1 \, p^2 - p(q-p) -H + \rho \left( \beta_2 \,p^2+ p(q-p)+H \right) \right]  \left( \beta_2 \,q^2 - q(q-p)-H \right)  = \rho  \beta_3 \, p^2 q^2.
\end{equation}
\begin{figure}[htbp]
\begin{center}
\includegraphics*[width=340pt]{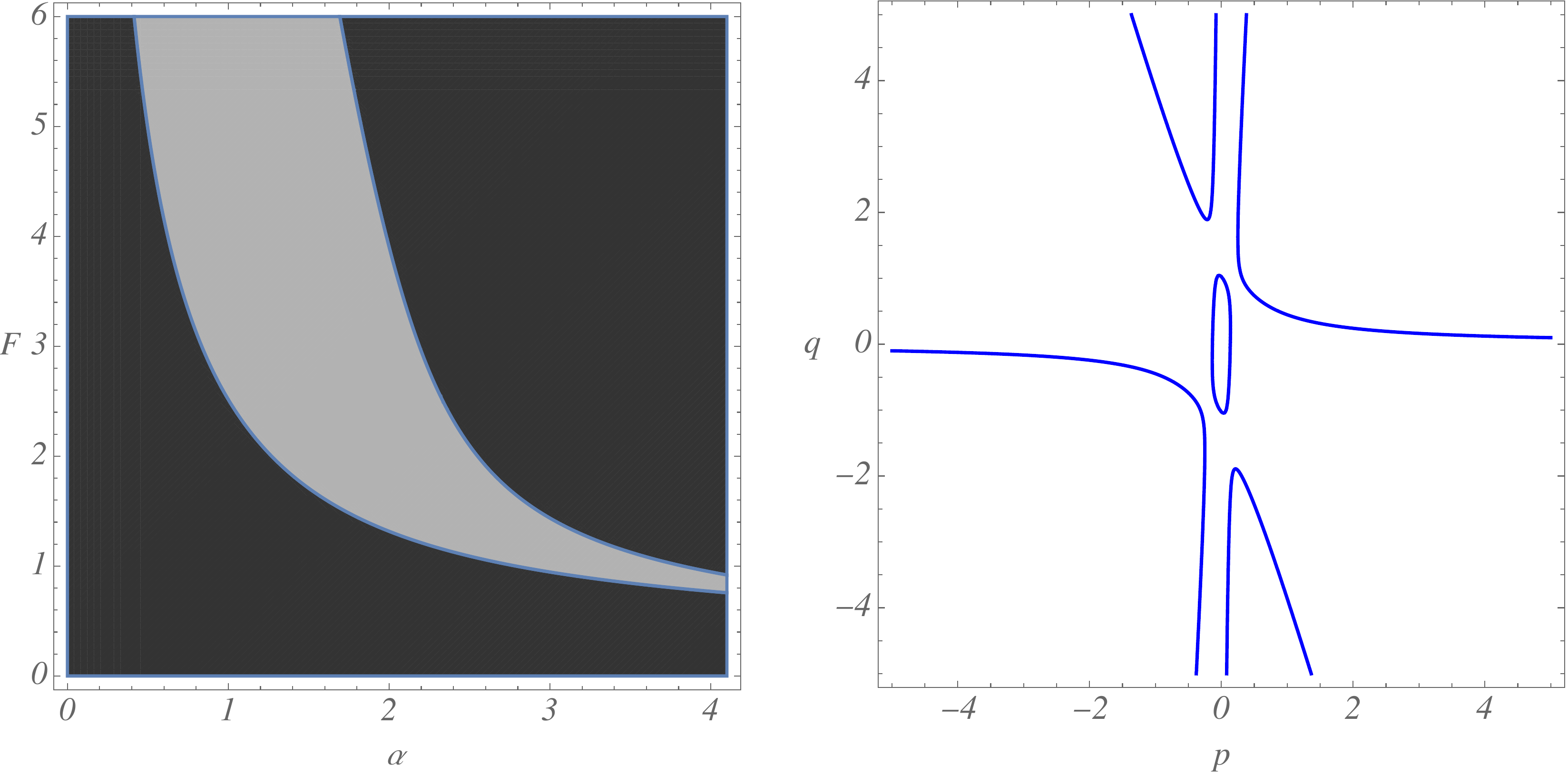}
\\
$(a)$ \hspace{4cm} $(b)$
\end{center}
\caption{$(a)$ Stability diagram on the $(\alpha, F)$-plane for the case of uniform vorticity given by Eq.~\eqref{curve_constant_vorticity} with $\rho=0.9$, $H=0.5$. Dark shaded: stable region; light shaded: unstable region. In panel $(b)$ we show the plot on the $(p, q)$-plane of the curve defined by Eq.~\eqref{curve_constant_vorticity} for $\rho=0.9$, $H=0.5$, and $\alpha =2.5$. 
\label{plot_diagram_constant_vorticity}
}
\end{figure}


Notice that all curves in this family are symmetric about the origin, which plays a key role in our analysis. Once again, by examining the highest degree polynomial, we find that $p = 0$ and $q = 0$ are asymptotes to any curve with Eq.~\eqref{curve_constant_vorticity}. The existence of additional asymptotes depend on the number of real roots of the quadratic equation 
\begin{equation}\label{quad_uniform_vorticity}
 (1 +H \beta_1 + \rho (\beta_2-1)) v^2 +  (\beta_2-2+ H\beta_1 (\beta_2-1) + \rho (\beta_2-1)^2 + \rho (1-\beta_3) ) \,v +  (\beta_2-1) (\rho-1) =0.
\end{equation}
The discriminant of this quadratic is given by $b_0 \rho^2 + b_1 \rho + b_2$, with coefficients:
\begin{align*}
b_0 &=H^2 \alpha ^2 \left(4 -4 \beta_2 +H^2 \alpha^2 \right),\\
b_1&= 2 H \left(-2 \beta_1 \beta_2^2+ 2 \beta_1 \beta_2+ H^2 \alpha ^2 
   \beta_1\beta_2 - H^2 \alpha^2 \beta_1 + H \alpha^2 \beta_2 - 2 H \alpha ^2 \right),\\
b_2&=(\beta_2+  H \beta_1 (\beta_2 -1))^2.
\end{align*}
Here, we have used the relationship $\beta_3 = \beta_2^2 - H^2 \alpha^2$. It is easy to see that the parabola $b_0 \rho^2 + b_1 \rho + b_2$ would have two real roots, if $\rho$ was allowed to take values on the whole real line. However, since $\rho$ is confined between $0$ and $1$ and the conditions in Corollary~\ref{no_roots_interval} (see Appendix C) are fulfilled, we conclude that $b_0 \rho^2 + b_1 \rho + b_2>0$. On that account, two more asymptotes to the curve \eqref{curve_constant_vorticity} must exist, with equations $p=v_0^{\pm} q$, where $v_0^{\pm}$ are real roots (with opposite signs) of \eqref{quad_uniform_vorticity}.
We can then guarantee (for fixed values of $\rho$, $H$, and $\alpha$) the stability of the flow for large values of $F$ (see Fig.~\ref{plot_diagram_constant_vorticity}$(b)$).

The fact that the top degree form $P_4$ does not have repeated factors implies the non-existence of singular points at infinity. To examine the existence of finite singular points, 
we consider the system of equations \eqref{cand_singular_pts}, by introducing the variable $v=p/q$, and seek conditions under which the two cubic polynomials in $v$ have a common root.
%
This amounts to imposing the vanishing of the resultant, given here by the expression:
$$
16 (\beta_2-1)^2 \beta_3^2 (\rho-1)^3 \rho^2 (1+ H \beta_1 + \rho (\beta_2-1))^2 (b_0  \rho^2 + b_1\rho + b_2).
$$
As we know from above, this expression is simply the product of positive terms, leading to the
non-existence of finite singular points to the curve. Similarly to the previous case, findings in Appendix E, describing the way the branches of the curve approach the asymptotes, guarantee that one single configuration exists for the family of curves \eqref{curve_constant_vorticity}. As a result, for any fixed wavenumber $\alpha$ there exists a finite range of Froude numbers $F$ at which the flow is unstable. Moreover, as depicted in Fig.~\ref{plot_diagram_constant_vorticity}$(a)$, for any fixed value of $F$, there is a limited band of wavenumber for instability, which becomes narrower as $F$ decreases. 

\begin{figure}[htbp]
\begin{center}
\includegraphics*[width=430pt]{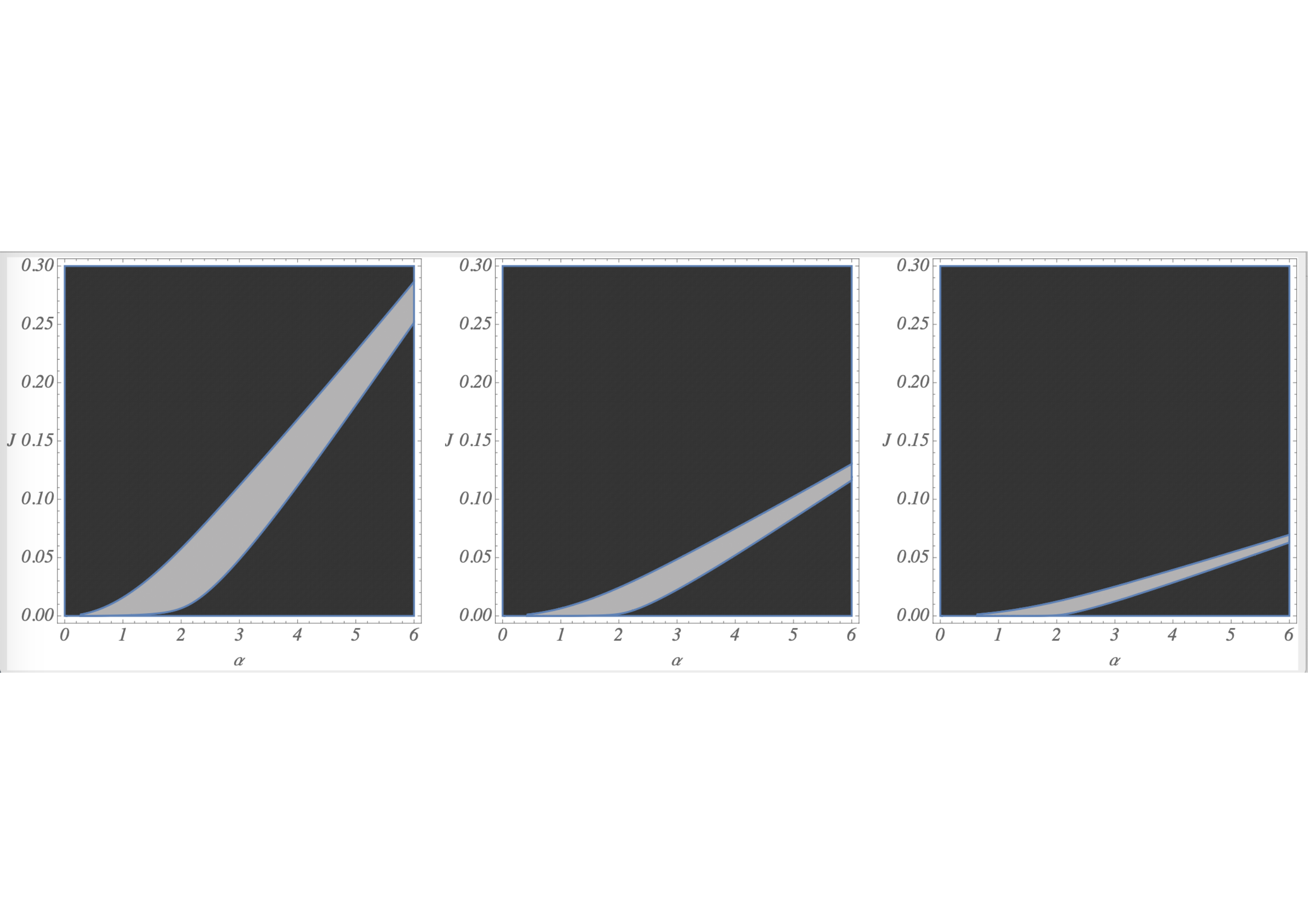}
\\
$(a)$ \hspace{4cm} $(b)$ \hspace{4cm} $(c)$
\end{center}
\caption{Stability diagrams on the $(\alpha, J)$-plane for the case of uniform vorticity given by Eq.~\eqref{curve_constant_vorticity} with $H=0.5$. Different values of $\rho$ are considered. $(a)$ $\rho=0.9$. $(b)$ $\rho=0.96$. $(c)$ $\rho=0.98$. Dark shaded: stable region; light shaded: unstable region. In all diagrams, the stability boundaries form a cusp at the origin.  
\label{stability_diagram_constant_vorticity_Richardson}
}
\end{figure}

In stratified flows, stability diagrams are often presented in terms of a Richardson number. Here, we can define the Richardson number $J$ by
$$
J= \left(\frac{\rho_1-\rho_2}{\rho_1}\right) \frac{gH_1}{(\hat{u}-u_0)^2},
$$
and visualize the stability diagram on the $(\alpha,J)$-plane, as in Fig.~\ref{stability_diagram_constant_vorticity_Richardson}. 
As the value of the density ratio $\rho$ increases, approaching 1, the less evident the instability region becomes. 
This is expected, since the homogeneous case, known to be stable, can be recovered in the limit when $\rho\rightarrow 1$.

Similarly to the previous case, this two-layer physical setting can be seen as a special case of the three-layer configuration known as the Taylor configuration, when the density in the upper layer tends to zero. As long as all densities are strictly positive, it can be shown that Taylor's configuration is stable at low Richardson number (see \S 4 in \cite{Barros_Choi_2014}). Clearly this cannot be the case here, since we have proved that for any fixed wavenumber $\alpha$ there exists a finite range of values of the Richardson number $J$ at which the two-layer uniform shear current is unstable.

We would also like to point out that the infinite-depth case cannot be directly recovered from our analysis, because of the non-dimensional scaling used here. However, the instability of the flow still holds and can be physically explained based on the wave-interaction mechanism, as proposed by Carpenter, Tedford, Heifetz, and Lawrence (see \S 5.1 in \cite{Carpenter_et_al}).



\section{Stability analysis for the general case}\label{sec:anal_gen_case}

As stressed earlier, the analysis of the curve \eqref{curve} is in general rather difficult, when it comes to examining its singularities to fully classify the configurations obtained for this family of curves. Based on the idea proposed in \S~\ref{sec:uniform_vorticity}, we introduce a new parameter $\lambda$, such that $\Omega_1 = \lambda \Omega_2$, to cast \eqref{curve} into the form
\begin{equation}\label{curve_gen_case}
\left[ H \beta_1 \, p^2 - \lambda p(q-p) -H + \rho \left( \beta_2 \,p^2+ p(q-p)+H \right) \right]  \left( \beta_2 \,q^2 - q(q-p)-H \right)  = \rho  \beta_3 \, p^2 q^2.
\end{equation}
This equivalent formulation of the stability problem has the advantage of the curve obtained being symmetric about the origin, which greatly simplifies the analysis of its singular points (finite or at infinity). We remark that the cases considered in \S~\ref{sec:omega_1_is_0} and \S~\ref{sec:uniform_vorticity} can be recovered from \eqref{curve_gen_case} by setting $\lambda =0$, and $\lambda=1$, respectively.

The homogeneous polynomials of \eqref{curve_gen_case} are found as:
\begin{align}\label{P4_gen_case}
P_4 (p,q)&= pq \Big[ (\lambda +H \beta_1 + \rho (\beta_2-1)) p^2 + \nonumber\\
&\phantom{=pq} + \left(\lambda(\beta_2-2)+ H\beta_1 (\beta_2-1) + \rho (\beta_2-1)^2 + \rho (1-\beta_3) \right)  \,pq + (\beta_2-1) (\rho-\lambda) q^2 \Big], \\
P_2(p,q) &= -H (\lambda+H \beta_1 + \rho (\beta_2-1)) p^2 + H(\lambda-1)\, pq + H (\beta_2-1) (\rho-1) q^2, \nonumber \\
P_0 &= H^2 (1- \rho) \nonumber. 
\end{align}
Consider the quadratic equation for $v=p/q$: 
\begin{equation}\label{quad_form_gen_case}
(\lambda +H \beta_1 + \rho (\beta_2-1)) \,v^2 + (-2 \lambda + 2\rho  -H \beta_1 + \lambda \beta_2 + H \beta_1 \beta_2 -2 \rho \beta_2 + \rho \beta_2^2 - \rho \beta_3 ) \,v +  (\beta_2-1) (\rho-\lambda)=0.  
\end{equation}
Contrary to the cases above, the highest and lowest order coefficients for this quadratic can vanish for specific parameters. As a consequence, $p$ or $q$ may not be simple factors for $P_4$, which could prevent the existence of vertical or horizontal asymptotes for the curve. Namely, using homogeneous coordinates, when $\lambda = \rho$, $p$ becomes a repeated factor of $P_4(p,q)$, with $(0,1,0)$ being a singular point at infinity. Also, if $\lambda = -( H\beta_1 + \rho (\beta_2-1))$, then $q$ is a repeated factor of $P_4 (p,q)$ and a singularity at infinity arises at $(1,0,0)$. 

To inspect when other singularities at infinity occur, we compute the discriminant of the quadratic in \eqref{quad_form_gen_case}, and obtain: 
\begin{equation}\label{dis_in_lambda}
\beta_2^2 \lambda^2 + d_1 \lambda + d_2,
\end{equation} 
with the following coefficients $d_1$, $d_2$:
\begin{align*}
d_1&= 2 \Big( -H \beta_1 \beta_2 + H \beta_1 \beta_2^2 - 2 \rho \beta_2^2 + \rho \beta_2^3 + 2\rho \beta_3 - \rho \beta_2 \beta_3 \Big),\\
d_2&= H^2 \beta_1^2 (\beta_2-1)^2 + 2 H \beta_1 (\beta_2-1) ( \beta_2^2 - 2\beta_2 - \beta_3)\, \rho + (\beta_2^2-\beta_3)\left( 4-4 \beta_2 + \beta_2^2 - \beta_3 \right) \rho^2.
\end{align*}
Every time \eqref{dis_in_lambda} vanishes, we have a singular point at infinity for the curve \eqref{curve_gen_case}.
It can be shown that \eqref{dis_in_lambda} has two real roots for $\lambda$ regardless of the physical parameters specified. Moreover, as illustrated in Fig.~\ref{transition_config_diagram}, given fixed values of $\rho$, $H$, and $\lambda$ satisfying $\lambda<\rho$ and $\lambda \neq 0$, there are two instances for the wavenumber $\alpha$ ($\alpha = \alpha_l$ and $\alpha = \alpha_r$) at which \eqref{dis_in_lambda} vanishes.
On this account, four asymptotes for the curve~\eqref{curve_gen_case} are expected, both for small and large values of $\alpha$. 
For intermediate values of $\alpha$ ($\alpha_l<\alpha<\alpha_r$), however, only the asymptotes $p=0$ and $q=0$ exist. 

The analysis of the finite singular points for the curve is considerably more technical, and hence is left to the Appendices. It will be simply noted here that such singularities cannot exist for the curve defined by \eqref{curve_gen_case}, which is 
of great help, since we may conclude that changes on the topological structure of the curve can only exist through parameters for which the curve has a singular points at infinity (see Appendix D).
\begin{figure}[htbp]
\begin{center}
\includegraphics*[width=340pt]{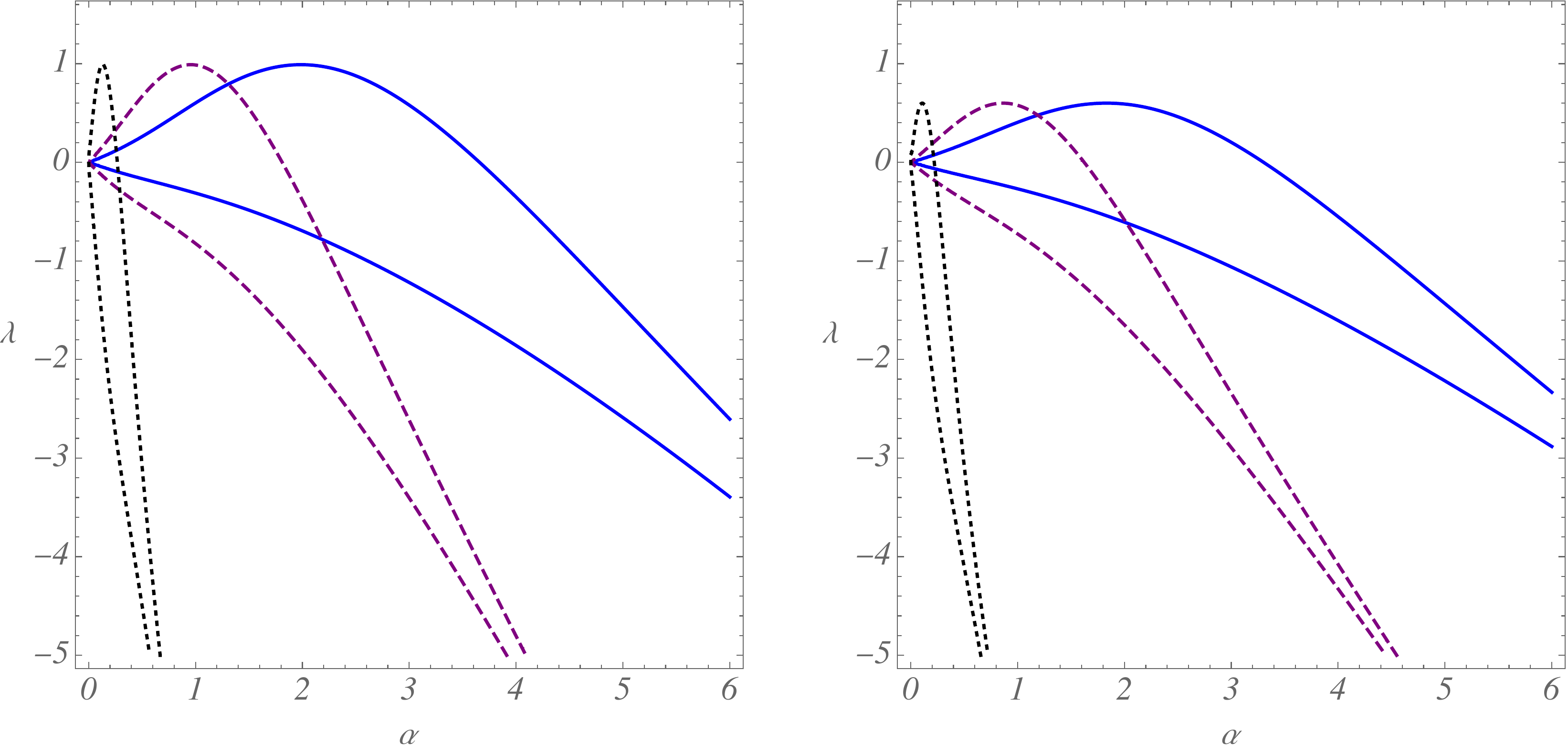}
\\
$(a)$ \hspace{4cm} $(b)$
\end{center}
\caption{Locus in the $(\alpha,\lambda)$-plane where \eqref{dis_in_lambda} vanishes. At these specific locations the curve \eqref{curve_gen_case}  has a singular point at infinity, and a transition of configurations takes place. Different values of $H$ are considered: $H=0.5$ (full line); $H= 1$ (dashed line); $H = 5$ (dotted line). $(a)$ $\rho=0.99$, $(b)$ $\rho=0.6$. In these diagrams, the origin $(0,0)$ always belongs to the geometrical locus where \eqref{dis_in_lambda} vanishes. To account for all cases when singularities at infinity are produced, the lines $\lambda=\rho$ and the graph of $\lambda = -( H\beta_1 + \rho (\beta_2-1))$ should be added to the diagram.
\label{transition_config_diagram}
}
\end{figure}
\begin{figure}[htbp]
\begin{center}
\includegraphics*[width=170pt]{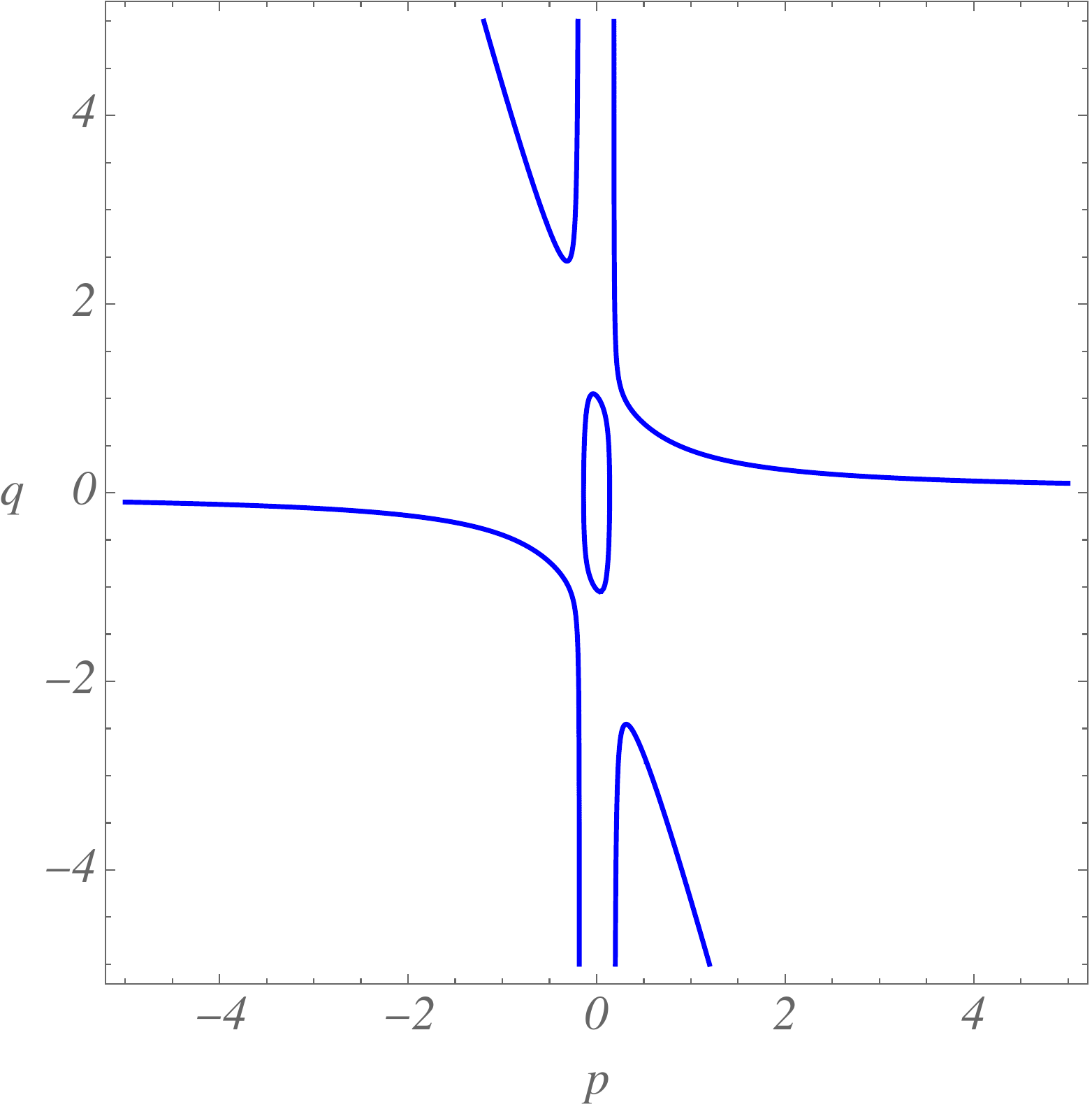}
\end{center}
\caption{Plots on the $(p, q)$-plane of the curve defined by Eq.~\eqref{curve_gen_case} for $\rho=0.9$, $H=0.5$, $\lambda=\rho$, and $\alpha =2.5$. 
\label{lambda_is_rho}
}
\end{figure}


Turning our attention to the case when $\alpha \gg 1$, we can then predict the existence of at least three distinct configurations (with at most three different topological structures) 
for the family of curves \eqref{curve_gen_case}, according to: ($i$) $\lambda>\rho$; ($ii$) $\lambda=\rho$; ($iii$) $\lambda<\rho$. In all three cases, we find curves endowed with four asymptotes and an oval-shaped contour around the origin. Also, according to Appendix E, for each one of the cases, there is one single arrangement of the branches of the curve connecting the asymptotes. While topologically speaking the curves are all the same, their configuration may be distinguished in the following way: 
in the first case, in addition to the vertical and horizontal asymptotes $p=0$ and $q=0$, respectively, there are two oblique asymptotes whose slopes have opposite signs, as in Fig.~\ref{plot_diagram_constant_vorticity}$(b)$; in the latter case, the slope for the two oblique asymptotes have the same (negative) sign, as in Fig.~\ref{plot_irrotational_lower_layer}$(b)$. The transition between the two configurations occurs when $\lambda=\rho$, at which the curve has a node at infinity and two parallel vertical asymptotes exist, {\it cf.} Fig.~\ref{lambda_is_rho}. Clearly, in all cases, instability holds for intermediate values of Froude numbers. As a result, short waves are always unstable, and we may state:
\begin{prop}
Consider the two-layer shear flow with piecewise constant vorticity defined by \eqref{physical_configuration} with $\rho_2<\rho_1$. Then the flow is stable to disturbances of arbitrary wavenumber if and only if $\Omega_2 =0$.
\end{prop}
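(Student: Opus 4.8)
The plan is to establish the two directions of the equivalence separately, relying on the geometrical machinery developed in the preceding sections. For the ``if'' direction ($\Omega_2=0 \Rightarrow$ stability), I would simply invoke the analysis of \S 3.1.2: when $\Omega_2=0$ the background velocity at the top equals that at the level $z=0$, so the eigenvalue equation collapses to a quartic in the single variable $(u_0-c)$, which after the change of variables becomes the curve \eqref{curve_irrot_upper_layer}. That curve is the graph of an explicit rational function of $p$ with three vertical asymptotes $p=p_i$ (with $q\to\pm\infty$ as $p\to p_i^\pm$) and one oblique asymptote, so every line $s=p-\widetilde\Omega_1$ meets it in four real points, giving four real roots of \eqref{dispersion_relation} and hence stability. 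Nothing new is needed here beyond citing that subsection.

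For the ``only if'' direction I would prove the contrapositive: if $\Omega_2 \neq 0$ then the flow is unstable for some wavenumber. The strategy is to reduce to the family of curves \eqref{curve_gen_case} parameterized by $\lambda = \Omega_1/\Omega_2$ (legitimate since $\Omega_2\neq 0$), and then run, for short waves $\alpha \gg 1$, the enumeration-of-intersections argument already carried out for the cases $\lambda=0$ and $\lambda=1$. The key ingredients, all established above or deferred to the Appendices, are: (a) the top-degree form $P_4$ in \eqref{P4_gen_case} always contributes the two asymptotes $p=0$ and $q=0$, and for $\alpha\gg1$ the discriminant \eqref{dis_in_lambda} is positive, yielding two further (oblique) asymptotes — so for large $\alpha$ the curve has exactly four asymptotes in one of the three configurations $\lambda>\rho$, $\lambda=\rho$, $\lambda<\rho$; (b) the curve has an oval-shaped component through four points near the origin on the axes, so the line \eqref{relationship_p_q} still meets the curve in four real points for small $|F|$; (c) crucially, there is no \emph{finite} singular point (Appendix D), so the topological type of the curve is constant across $\alpha_l<\alpha<\alpha_r$ and across the short-wave regime, and the branch-arrangement analysis of Appendix E pins down the unique configuration in each of the three cases. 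From the configuration one reads off directly that there is a range of intermediate $q$-intercepts $F$ for which the line \eqref{relationship_p_q} misses two of the branches, giving only two real roots of the quartic and hence a genuinely complex pair — instability.

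The main obstacle — and the reason so much is relegated to the appendices — is controlling the global shape of the quartic curve \eqref{curve_gen_case} for \emph{all} admissible parameters simultaneously. Showing that no finite singular point exists requires eliminating $p/q$ from the system \eqref{cand_singular_pts} and verifying that the resulting resultant, a polynomial in $\rho$ (with coefficients depending on $H,\alpha,\lambda$), never vanishes on the relevant parameter range; this is the analogue of the $e_4\neq0$ statement used in \S 3.2.1 and is where the positivity lemmas of Appendix C and the computations of Appendix D do the real work. Once that is in hand, the absence of finite singularities plus the Petrowsky-type lemma forces the configuration to change only through singular points at infinity, i.e.\ only as the discriminant \eqref{dis_in_lambda} (or $\lambda=\rho$, $\lambda=-(H\beta_1+\rho(\beta_2-1))$) passes through zero, and the finitely many resulting configurations are each handled by the Appendix E branch-ordering argument. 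I would therefore present the proof as: (1) the $\Omega_2=0$ case is stable by \S 3.1.2; (2) for $\Omega_2\neq0$ set $\lambda=\Omega_1/\Omega_2$, pass to \eqref{curve_gen_case}, and note four asymptotes plus an oval for $\alpha\gg1$; (3) invoke Appendices D and E to conclude the configuration is one of the three described, each of which leaves a band of $F$ with only two real intersections; (4) conclude instability of short waves, completing the contrapositive.

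\medskip

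\noindent\emph{Remark on structure.} In the write-up I would state explicitly that the hypothesis $\rho_2<\rho_1$ (equivalently $0<\rho<1$) is used throughout — it is what guarantees $\beta_1>1$, $\beta_2>1$, $0<\beta_3<1$, the sign conditions on the coefficients of \eqref{quad_form_gen_case}, and the positivity of the relevant resultant factors — so the proposition genuinely concerns the stratified (non-Boussinesq) regime and the contrast with the homogeneous limit $\rho\to1$ is sharp.
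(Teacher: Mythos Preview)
Your proposal is correct and follows essentially the same route as the paper: the ``if'' direction is \S 3.1.2 verbatim, and for the ``only if'' direction you introduce $\lambda=\Omega_1/\Omega_2$, pass to the symmetric curve \eqref{curve_gen_case}, and use Appendix~D (no finite singularities) together with the branch-arrangement analysis of Appendix~E to conclude that for $\alpha\gg1$ the curve falls into one of the three configurations $\lambda\gtreqqless\rho$, each of which forces a band of $F$ with only two real intersections. The only cosmetic difference is that the paper does not dwell on the intermediate range $\alpha_l<\alpha<\alpha_r$ at all---it goes straight to short waves---so your aside about topological constancy there is unnecessary for the argument.
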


For completeness, it is worth mentioning that although a complete characterization of the configurations obtained for the family of curves \eqref{curve_gen_case} was not pursued here, such investigation would require a close examination of each one of the six subsets in the parameter space listed below, which can easily be identified from Figure~\ref{transition_config_diagram}:  
\begin{equation}
\begin{array}{cccccc}
\text{(I)} & \lambda > \rho, & \text{(II)} & \lambda = \rho, & \text{(III)} & 0<\lambda < \rho,\\
\text{(IV)} & \lambda = 0, & \text{(V)} & -H<\lambda < 0, & \text{(VI)} & \lambda \leqslant -H.
\end{array}\label{six_cases}
\end{equation}
For a particular stratification relevant for real applications ($\rho=0.9$, $H=0.2$), we illustrate in Fig.~\ref{stability_diagrams_full_set} how changes in the value of $\lambda$ can affect the stability features of the flow. Different values of $\lambda$ are prescribed to cover each one of the cases listed in \eqref{six_cases}.     
Although the stability diagrams in Figure~\ref{stability_diagrams_full_set} show a great variety, there is typically a single band of wavenumber for instability, for for fixed values of the Richardson number $J$. In panel $(c)$, corresponding to the case (III), two instability bands can, however, be found for small values of $J$. It is interesting to note that in panels $(a)$ and $(c)$, corresponding to the cases (I), (III), respectively, the stability boundaries form a cusp at the origin.  

The figure also suggests that considering negative values $\lambda$, as in cases (V), (VI),
has a stabilizing effect on longer waves. More precisely, it can shown that stability holds in this case for wave numbers below $\alpha_l$, where $\alpha_l$ is the smallest root of \eqref{dis_in_lambda}, when solved for $\alpha$. Furthermore, as the value of $\lambda$ further decreases, the larger is the range of wave numbers for which the stability holds, as shown in panels $(e)$,$(f)$.

\begin{figure}[htbp]
\begin{center}
\includegraphics*[width=430pt]{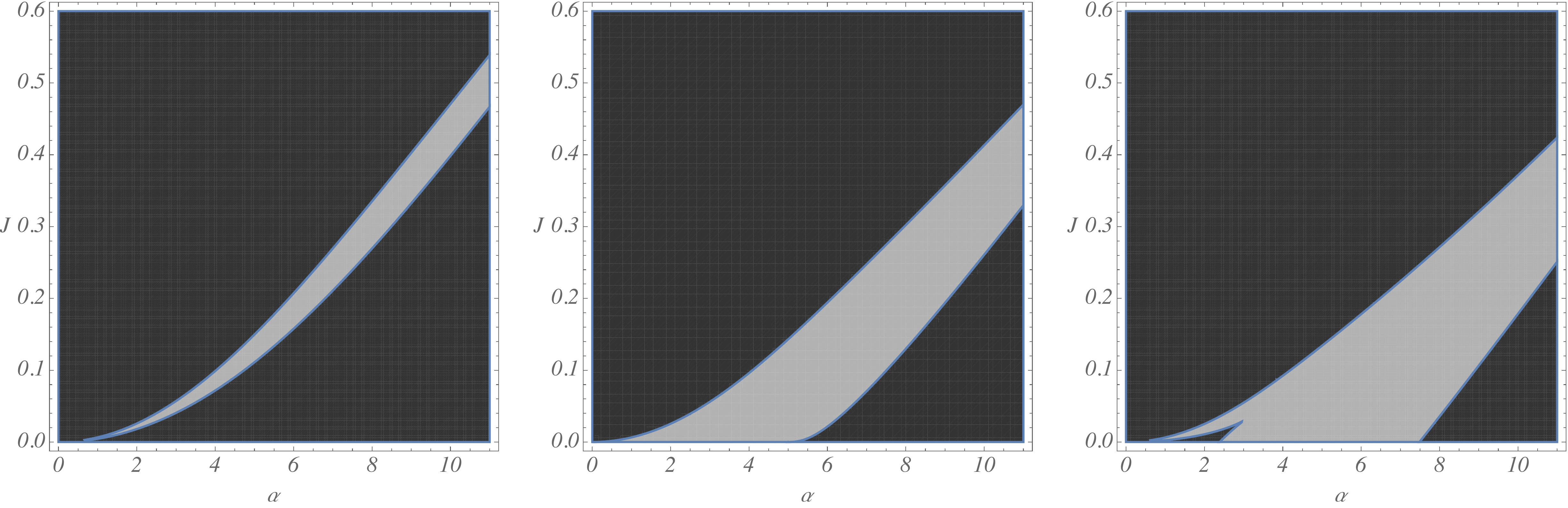}
\\
$(a)$ \hspace{4cm} $(b)$ \hspace{4cm} $(c)$\\
\includegraphics*[width=430pt]{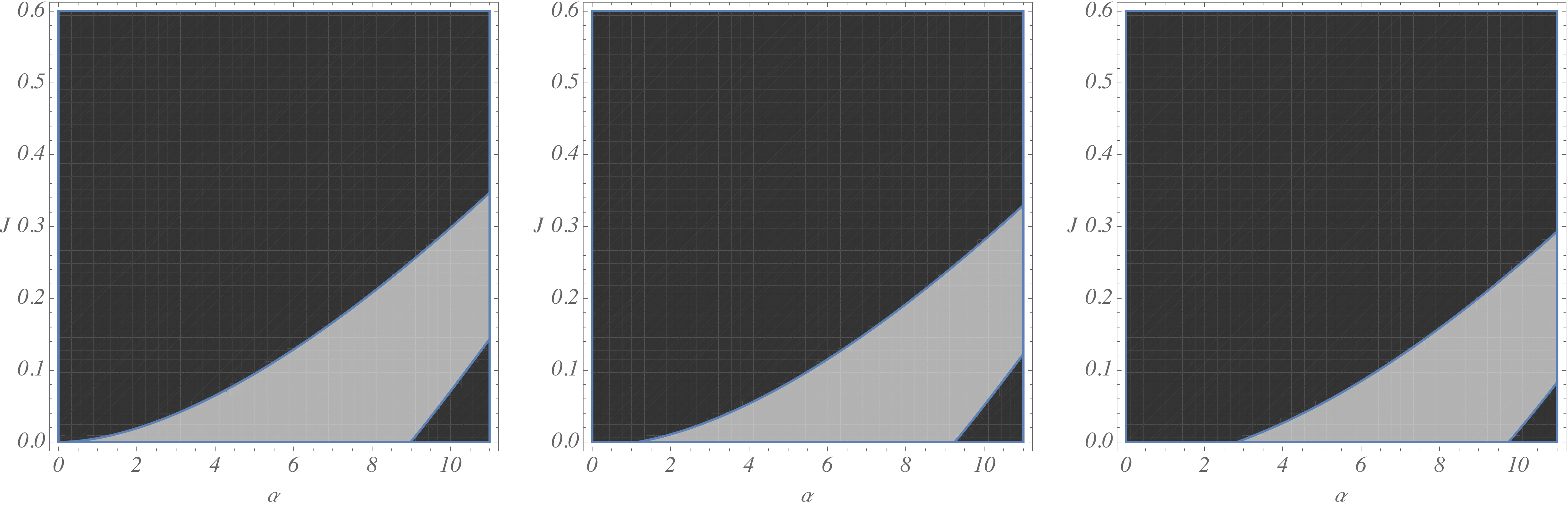}
\\
$(d)$ \hspace{4cm} $(e)$ \hspace{4cm} $(f)$\\

\end{center}
\caption{Stability diagram on the $(\alpha, J)$-plane for the general case given by Eq.~\eqref{curve_gen_case} with $\rho=0.9$, $H=0.2$. Different values of $\lambda$ are considered to cover all the six cases (I)--(VI) given in \eqref{six_cases}. $(a)$ $\lambda = 2$. $(b)$ $\lambda = 0.9$. $(c)$ $\lambda = 0.5$. $(d)$ $\lambda = 0$. $(e)$ $\lambda = -0.1$. $(f)$ $\lambda = -0.3$. Dark shaded: stable region; light shaded: unstable region. In panels $(a)$ and $(c)$ the stability boundaries form a cusp at the origin. 
\label{stability_diagrams_full_set}
}
\end{figure}

\section{A stability criterion for bilinear shear currents in a homogeneous fluid}\label{sec:homogeneous_case}

The results from previous section reveal that stratification in density is insufficient to stabilize bilinear shear currents, and could even promote instability of the flow. Here, we investigate the conditions under which bilinear shear currents in a homogeneous fluid are stable. The case when the shear current is modelled by $N$ constant vorticity layers was recently addressed by Chesnokov, El, Gavrilyuk, and Pavlov~\cite{Chesnokov_et_al}, and sufficient conditions for the long wave limit stability were proposed (see Lemma 3.2 in \cite{Chesnokov_et_al}). 
According to the authors, provided the vorticities are ordered from the bottom to the top layer (and possibly of different sign values), long waves are stable. Remarkably, this same criterion was proposed by Carpenter, Tedford, Heifetz, and Lawrence (see \S 4.2 in \cite{Carpenter_et_al}) for the case when dispersive effects of Euler equations are considered for the piecewise-linear shear current in an infinite domain. 

Here, dispersive effects and finite-depth effects are considered, but the analysis is limited to bilinear shear currents ($N=2$). As we will see in Proposition~\ref{hom_results}, the geometrical approach allows us to derive necessary and sufficient conditions for the stability of the flow. In contrast with the stability criterion for long waves proposed in \cite{Chesnokov_et_al}, stability to disturbances of arbitrary wavenumber requires vorticities to be ordered, but also have the same sign.


By taking in \eqref{curve_gen_case} the limit when $\rho \rightarrow 1$, degeneracy occurs and the curve splits up into a vertical line $p=0$ (spurious solution) and a cubic curve. 
The cubic curve describing the stability problem 
is given as:
\begin{equation}\label{hom_case}
P_3 (p,q) + P_1(p,q)=0,
\end{equation}
with homogeneous polynomials $P_3$ and $P_1$ defined as follows:
\begin{align*}
P_3(p,q)&= q \Big( (\lambda + H \beta_1 + \beta_2-1)\,  p^2 + (2-H\beta_1 - 2\beta_2 + H \beta_1 \beta_2 + \beta_2^2 - \beta_3 -2\lambda + \lambda \beta_2) \,pq +\\
&\phantom{= q \Big(} +(\beta_2-1) (1-\lambda) \,q^2 \Big), \\
P_1 (p,q) &= -H (\lambda + H \beta_1 + \beta_2-1) p + H (\lambda-1)q.
\end{align*}
When $\lambda=1$ or $\lambda+H\beta_1 + \beta_2-1=0$, the curve with Eq.~\eqref{hom_case} becomes degenerate. 
In the former case, the curve splits up into a vertical line $p=0$ and a hyperbola. Finite singular points may exist, depending on the parameters considered, {\it cf.} Fig.~\ref{hom_case_lambda_is_1}. In the latter case, the curve splits up into a horizontal line $q=0$ and a hyperbola. Here, $(1,0,0)$ becomes a singular point at infinity. In either case, the hyperbola can be parameterized as $p=p(q)$, and any line with slope one as in \eqref{relationship_p_q} cuts the curve in three distinct points. 
\begin{figure}[htbp]
\begin{center}
\includegraphics*[width=340pt]{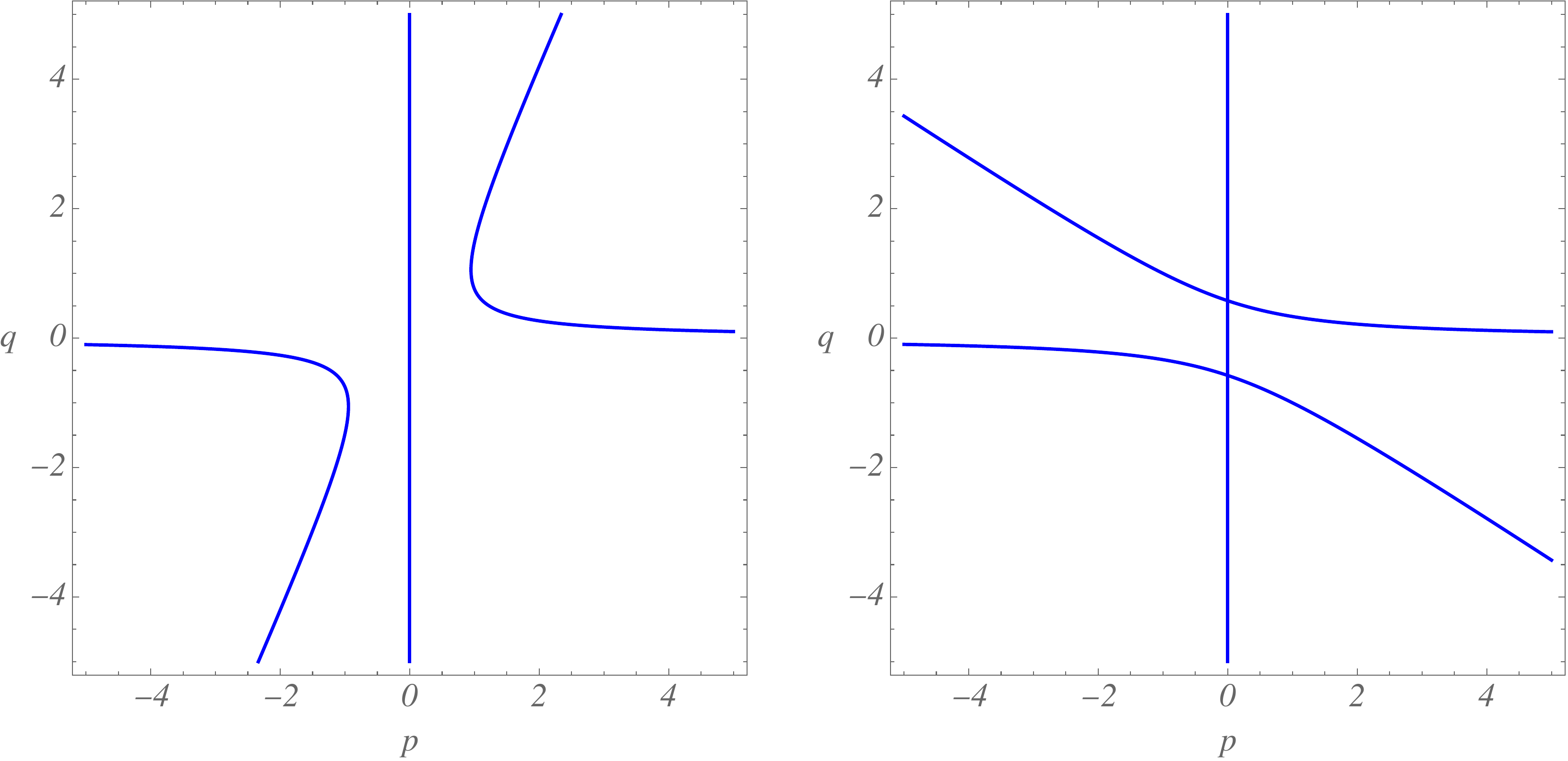}
\\
$(a)$ \hspace{4cm} $(b)$
\end{center}
\caption{Plots on the $(p, q)$-plane of the curve defined by Eq.~\eqref{hom_case} for $H=0.5$ in the degenerate case when $\lambda=1$. $(a)$ $\alpha =1$. $(b)$ $\alpha =5$. 
\label{hom_case_lambda_is_1}
}
\end{figure}


To inspect when other singularities at infinity occur, consider the discriminant of the quadratic obtained from \eqref{quad_form_gen_case} by substituting $\rho$ by 1, providing the repeated factors of $P_3$, other than $q$:
\begin{equation}\label{discriminant_hom_case}
\beta_2^2 \lambda^2 + \tilde{d}_1 \lambda + \tilde{d}_2.
\end{equation} 
Here, the coefficients $\tilde{d}_1$, $\tilde{d}_2$ are obtained, respectively, from $d_1$ and $d_2$ in \eqref{dis_in_lambda} by substituting $\rho$ by 1.  

\begin{figure}[htbp]
\begin{center}
\includegraphics*[width=170pt]{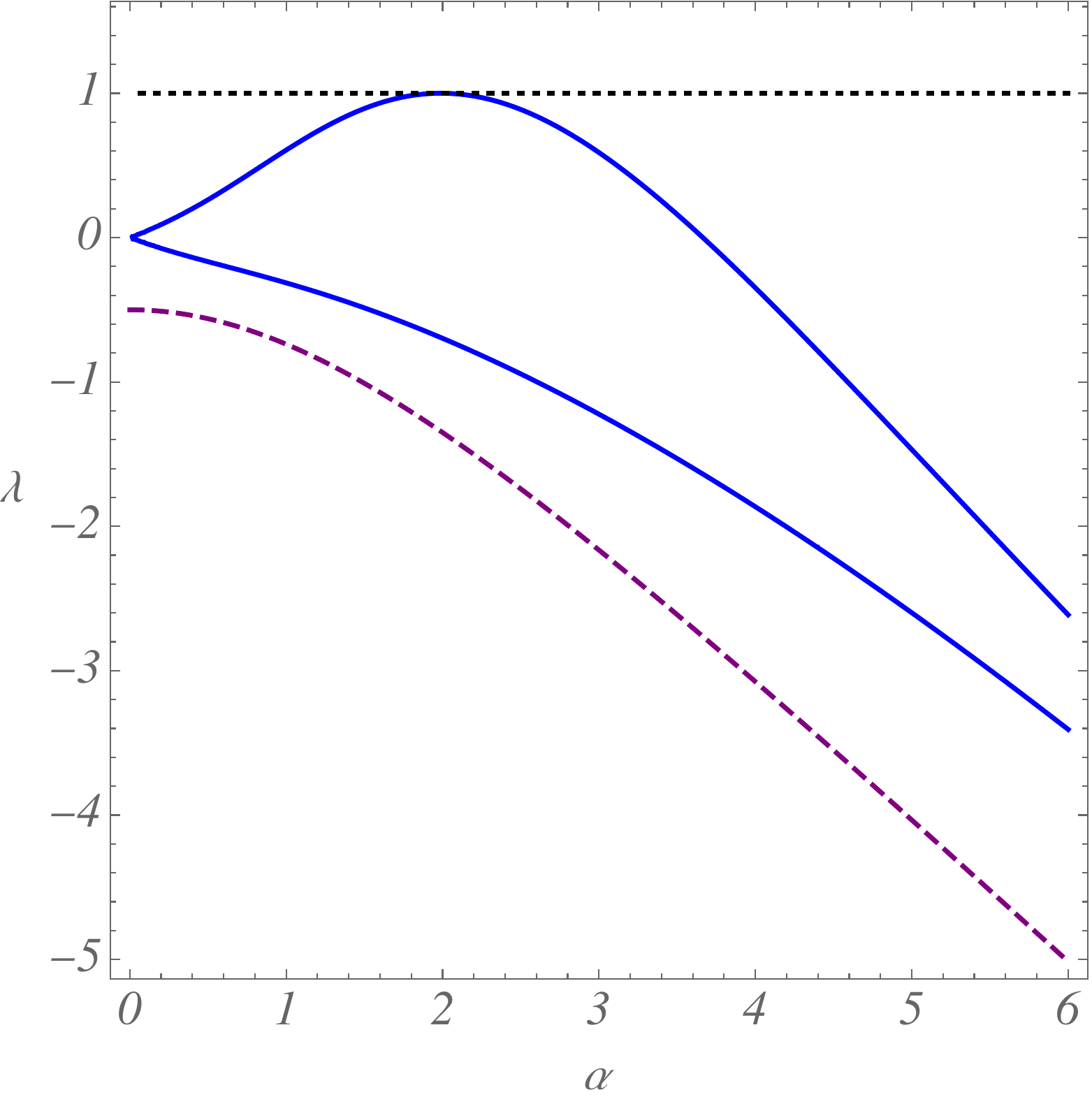}
\end{center}
\caption{Locus in the $(\alpha,\lambda)$-plane where \eqref{discriminant_hom_case} vanishes with $H=0.5$ (full line). The location where the curve becomes degenerate is given by $\lambda = -( H\beta_1 + \beta_2-1)$ (dashed line) and $\lambda=1$ (dotted line).
\label{transition_config_diagram_hom_case}
}
\end{figure}


As shown in Appendix D, finite singular points cannot exist unless $\lambda=1$. We can then identify the subregions in the parameter space where no singularities occur (see Fig.~\ref{transition_config_diagram_hom_case}), and try to characterize in each one of these subregions how many possible curve configurations there may be. In particular, if one focus strictly on short waves ($\alpha \gg 1$), we can foresee the existence of at least three distinct configurations for the the family of curves given by Eq.~\eqref{hom_case}, according to: ($i$) $\lambda>1$; ($ii$) $\lambda=1$; ($iii$) $\lambda<1$. The transition between the configurations in $(i)$ and $(iii)$ (see Fig.~\ref{hom_case_large_alpha}) occurs when $\lambda=1$, at which the curve is degenerate and has two finite singular points. 
Similarly to what is described in Appendix E for the stratified case, it can be shown that, provided $\alpha$ is large enough, for each subregion of the parameter space ($\lambda>1$, or $\lambda<1$) there is one single arrangement of the curve branches connecting the asymptotes. Therefore, if the curve is non-degenerate and the values of $\alpha$ are large enough, there are strictly two distinct configurations for the curve, {\it cf.} Fig.~\ref{hom_case_large_alpha}.

Clearly, in the case $(iii)$, illustrated in Fig.~\ref{hom_case_large_alpha}($b$), instability holds for a finite range of Froude numbers. In case $(i)$, however, stability holds, according to Fig.~\ref{hom_case_large_alpha}($a$). More importantly, it can be shown that this configuration, obtained for $\lambda>1$ and large values of $\alpha$, persists for any strictly positive value of $\alpha$, even if arbitrarily small.  
Hence we have:

\begin{figure}[htbp]
\begin{center}
\includegraphics*[width=340pt]{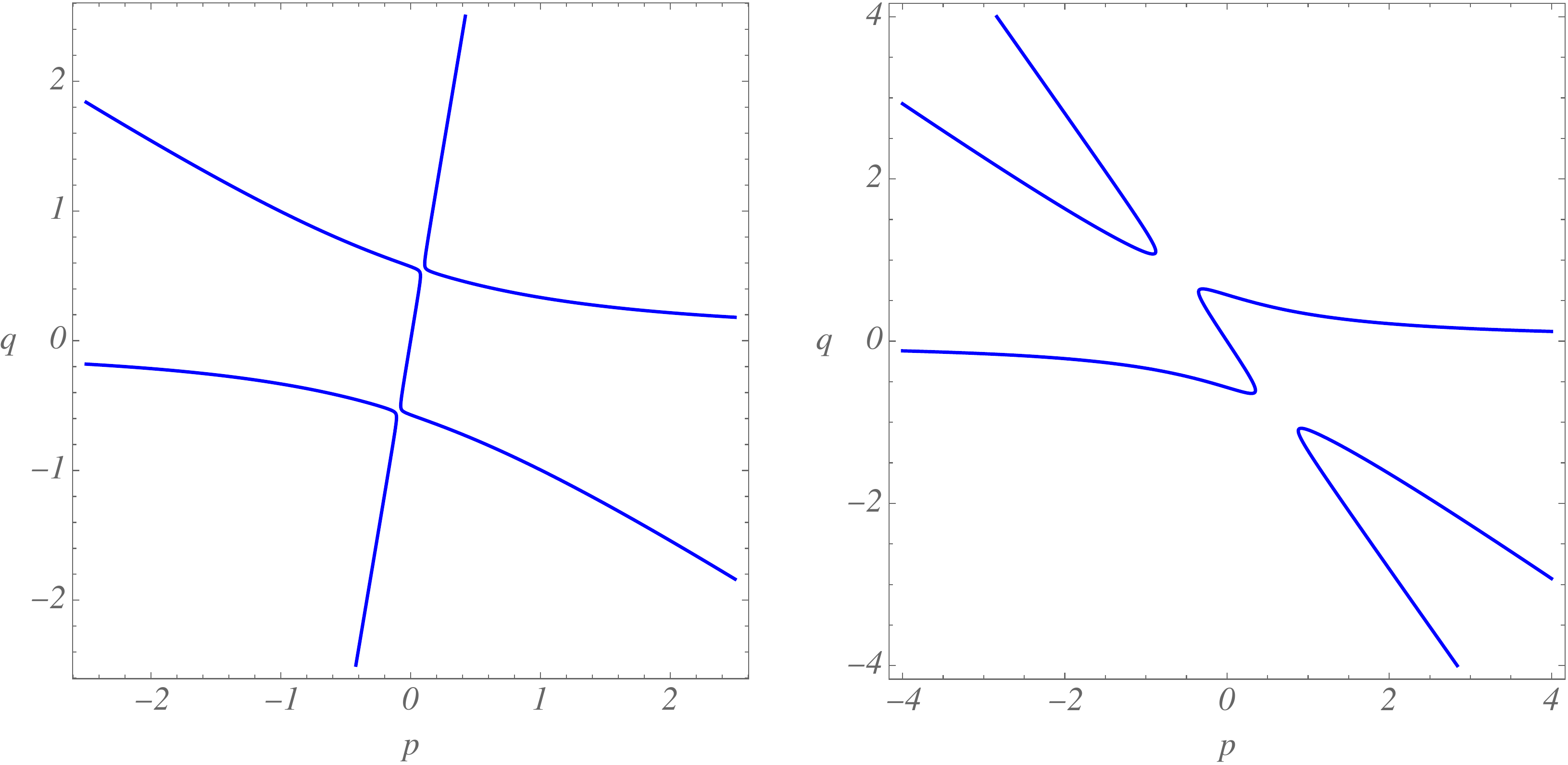}
\\
$(a)$ \hspace{4cm} $(b)$
\end{center}
\caption{Plots on the $(p, q)$-plane of the curve defined by Eq.~\eqref{hom_case} for $H=0.5$ and $\alpha=5$, with different values of $\lambda$. $(a)$ $\lambda =2$. $(b)$ $\lambda =-1$. These two configurations are the only possible configurations for non-degenerate curves of the family with Eq.~\eqref{hom_case}, provided the values of $\alpha$ are large enough.
\label{hom_case_large_alpha}
}
\end{figure}


\begin{prop}\label{hom_results}
Consider the homogeneous shear flow with piecewise constant vorticity defined by \eqref{physical_configuration} with $\rho_1=\rho_2$. Then the flow is stable to disturbances of arbitrary wavenumber if and only if 
$$
0\leqslant\Omega_2 \leqslant \Omega_1, \quad {\text or} \quad 0\geqslant\Omega_2 \geqslant \Omega_1.
$$
\end{prop}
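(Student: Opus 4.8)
The plan is to work entirely with the reduced cubic curve \eqref{hom_case}, obtained from \eqref{curve_gen_case} in the limit $\rho\to1$ once the spurious line $p=0$ is discarded, together with the dictionary $\lambda=\Omega_1/\Omega_2$ (valid whenever $\Omega_2\neq0$) and $F=H\widetilde{\Omega}_2$. First I would unravel the inequalities in the statement: the claimed stability set is exactly $\{\Omega_2=0\}\cup\{\Omega_2\neq0,\ \lambda\geqslant1\}$, since $0\leqslant\Omega_2\leqslant\Omega_1$ with $\Omega_2>0$ gives $\lambda\geqslant1$, while $0\geqslant\Omega_2\geqslant\Omega_1$ with $\Omega_2<0$ again gives $\lambda\geqslant1$ (dividing by the negative number $\Omega_2$ reverses the inequality), and $\Omega_2=0$ leaves $\Omega_1$ unconstrained. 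Recalling from \S\ref{sec:formulation} that, for a given wavenumber $\alpha$, the flow is stable precisely when the line \eqref{relationship_p_q} meets the cubic in three real points (a degree-$3$ curve and a line meet in three points of the projective plane, and stability is the case in which all three are affine and real), ``stable to disturbances of arbitrary wavenumber'' means: for every $\alpha>0$ and every $F\in\R$ the line $q=p+F$ has three real intersections with \eqref{hom_case}. The case $\Omega_2=0$ is dealt with at once, since it is the constant-upper-layer configuration already shown above to be stable and it lies in the claimed set; so from now on I would assume $\Omega_2\neq0$ and work with $\lambda$.

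For the sufficiency direction I would split $\lambda\geqslant1$ into $\lambda=1$ and $\lambda>1$. When $\lambda=1$ the cubic degenerates into the line $p=0$ together with a hyperbola which, as noted just before the statement, is the graph of a single-valued function $p=p(q)$; hence every line of slope one meets the degenerate curve in $1+2=3$ real points, so stability holds for all $\alpha$ and all $F$. When $\lambda>1$ I would argue that the curve configuration does not depend on $\alpha$: by Appendix D the curve \eqref{hom_case} has no finite singular points unless $\lambda=1$, and the discriminant \eqref{discriminant_hom_case} stays strictly positive for every $\alpha>0$ when $\lambda>1$, so the curve always has four asymptotes and never acquires a singular point at infinity. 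By the principle that the topological type of a real plane curve changes only across parameter values at which the curve is singular (Lemma~1 in \cite{Petrowsky}), the topological type is constant in $\alpha$; together with the branch-connection analysis (the homogeneous analogue of Appendix E in the case $\lambda>1$) this rules out two distinct configurations of that type, so there is a single one, namely the configuration computed for $\alpha\gg1$ in Fig.~\ref{hom_case_large_alpha}$(a)$. That configuration has three real intersections with every line of slope one, hence the flow is stable for all $\alpha>0$ and all $F$.

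For the necessity direction, I would assume $\Omega_2\neq0$ and $\lambda<1$ and exhibit an unstable wavenumber. Choosing $\alpha$ large enough for the cubic to be non-degenerate, the short-wave analysis already referenced (case $(iii)$, Fig.~\ref{hom_case_large_alpha}$(b)$) puts the curve in the configuration for which there is a non-empty range of $F$ with only one real intersection with the line $q=p+F$, the remaining two intersection points forming a complex-conjugate pair. For such $F$ the dispersion relation \eqref{dispersion_relation} --- equivalently the cubic factor left after removing the spurious root --- has a genuinely complex root, so ${\rm Im}(c)\neq0$ and the flow is unstable at that wavenumber. Contrapositively, stability to disturbances of arbitrary wavenumber forces $\lambda\geqslant1$ whenever $\Omega_2\neq0$, which, combined with the $\Omega_2=0$ discussion, is exactly the stated dichotomy $0\leqslant\Omega_2\leqslant\Omega_1$ or $0\geqslant\Omega_2\geqslant\Omega_1$.

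I expect the main obstacle to be the persistence claim inside the sufficiency step: that for $\lambda>1$ the favourable three-real-intersection configuration holds for \emph{every} $\alpha>0$ and not merely for $\alpha\gg1$. This factors into two technical inputs. First, one must check that $\beta_2^2\lambda^2+\tilde{d}_1\lambda+\tilde{d}_2$ in \eqref{discriminant_hom_case} is positive for all $\alpha>0$ when $\lambda>1$ --- an algebraic sign estimate in the spirit of the corollary in Appendix C, using $\beta_1>1$, $\beta_2>1$ and $\beta_3=\beta_2^2-H^2\alpha^2$ --- so that no singular point at infinity is ever produced; and, as a prerequisite for the ``topology locally constant'' argument, that \eqref{hom_case} has no finite singular points (Appendix D). Second, one must rule out, through the arrangement of the branches joining the asymptotes (Appendix E), the a priori possibility of two curves with the same topological type but different intersection counts against lines of slope one. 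Once these are secured, the remaining steps --- the $\rho\to1$ degeneration, the explicit three-point count on the degenerate and large-$\alpha$ curves, and translating $\lambda\geqslant1$ back into the ordering of $\Omega_1$ and $\Omega_2$ --- are routine bookkeeping.
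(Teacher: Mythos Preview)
Your proposal is correct and mirrors the paper's own argument in \S\ref{sec:homogeneous_case}: translate the condition to $\{\Omega_2=0\}\cup\{\lambda\geqslant1\}$, dispose of $\Omega_2=0$ and the degenerate case $\lambda=1$ directly, establish persistence of the stable configuration for $\lambda>1$ via the absence of finite singular points (Appendix~D) and of singular points at infinity (positivity of \eqref{discriminant_hom_case}) together with the branch-arrangement analysis, and read off instability at large $\alpha$ for $\lambda<1$ from the configuration of Fig.~\ref{hom_case_large_alpha}$(b)$. You also correctly isolate the persistence step as the crux, which the paper likewise only asserts (``it can be shown that this configuration\ldots persists for any strictly positive value of $\alpha$'') rather than proves in detail.
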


\section{Concluding remarks}

We have found necessary and sufficient conditions for the stability of a free surface fluid current modeled as two finite layers of constant vorticity. The criteria depend on the strength, but not on the thickness, of each vorticity layer. 
According to the Remark 3.4 in \cite{Chesnokov_et_al}, this would not be the case of a basic flow composed by three finite layers of constant vorticity, for which the stability of long waves may also depend on the values of depth ratios.

Although the physical setting here (see Fig.~\ref{sketch}$(b)$) was considered in previous studies (see {\it e.g.}{ Dalrymple~\cite{Dalrymple}), the question of stability had not been fully addressed. Our results indicate that caution should be taken when modeling a fluid current as a bilinear shear current. Take for example the case when the mean horizontal velocity $U(z)$ is convex and monotonically increasing. According to Yih~\cite{Yih}, such flow is stable.
Any attempt to model this current as a bilinear shear current could thus lead to important discrepancies with experimental results, since such reduction would lead to an unstable configuration.  
The extent of the consequences of doing so with many more vorticity layers (as in Ref.~\cite{Swan_et_al}) is currently not known, as only sufficient conditions are known for the stability of the flow in the longwave limit \cite{Chesnokov_et_al}, and warrants further study. 

Comparison with the setting where the vorticity interface becomes also a density interface, separating two immiscible liquids with constant densities, shows that stratification in density has a strong destabilizing effect on the stability of the flow. It is found that only when the vorticity of the upper layer vanishes, the flow is stable, 
which is in disagreement with the claim made by Curtis, Oliveras, and Morrison~\cite{Curtis_et_al} that suppression of instability for weak upper layer vorticity is a generic feature.  
Their erroneous conclusion was based on the numerical results to inspect the nature of roots for the dispersion relation (see Figs.~3,4 in \cite{Curtis_et_al}) and asymptotic solutions in the Boussinesq regime. The arguments are misleading for the following reasons: when the vorticity in the upper layer is weak ({\it i.e.,} $F\approx 0$), the instability manifests only at very short waves ($\alpha\gg1$). Therefore, given that the instability band becomes increasingly narrower as the wavenumber $\alpha$ increases (see Figs.~\ref{stability_diagram_irrotational_lower_layer},\ref{plot_diagram_constant_vorticity}), it can hardly be detected numerically. In addition, the complex roots within this instability band have imaginary parts that are very close to zero. Hence, when using asymptotics, if the polynomial roots are determined only at leading order, as in \cite{Curtis_et_al}, one fails to predict their complex nature.  

It would be interesting to explore the role of these instabilities on the stability properties of finite-amplitude waves in these physical settings. 
This will be reserved to future work.
We conclude my mentioning that all the rigorous results in this study, summarized in Table~\ref{table_results}, are valid for idealized models where the physical effects caused by viscosity, surface tension, and three-dimensional motions, are neglected. For real applications the possible inadequacies of such models need to be tested.
\begin{table}[htbp]
\centering
\caption{Stability criteria for the physical configurations under consideration.}
\label{table_results}
\includegraphics*[width=370pt]{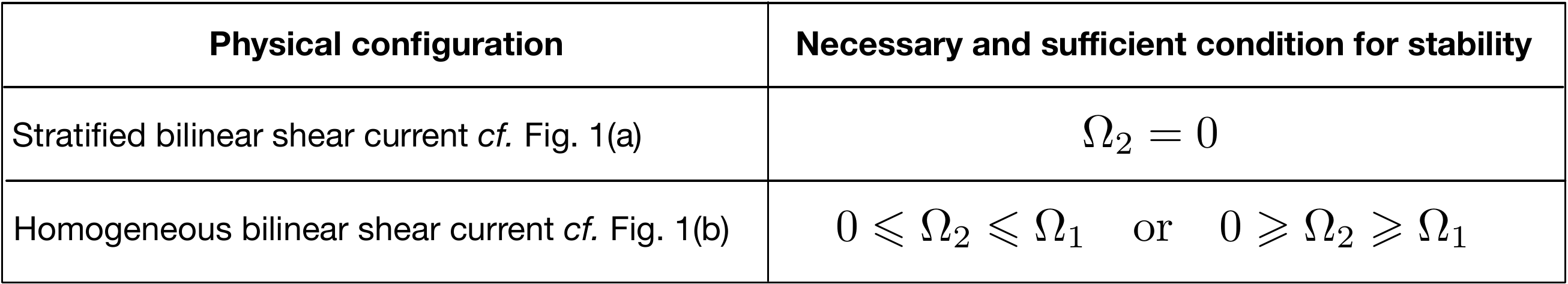}
\end{table}


%

\section*{Acknowledgements}
R.B. would like to thank F. Cl\'{e}ry and H. Ahmadinezhad for helpful discussions. J.F.V. gratefully acknowledges support through the Marsden Fund Council administered by the Royal Society of New Zealand. We thank the anonymous referees for their valuable comments, which helped us improve the quality of this manuscript.

\secta{Appendix A. Nature of roots for the dispersion relation \eqref{dispersion_relation}}


Here, we show that four complex solutions to the eigenvalue equation \eqref{dispersion_relation} can never exist. This feature stems from the geometrical approach in a very natural way, given that any straight line with slope one, as in \eqref{relationship_p_q}, intersects the curve with equation \eqref{curve} at least twice. To prove this, first we establish that $p=0$ and $q=0$ are asymptotes to the curve, regardless of the physical parameters used. Then, we will show that these are connected by the curve one to the other, entirely within the first quadrant (similarly in the third quadrant), from which the result follows.  

We start by casting the curve equation \eqref{curve} into the form
\begin{equation*}\label{homogeneous_polynomials_expression}
P_4 (p,q) + P_3 (p,q) + P_2 (p,q) + P_1 (p,q) + P_0 =0,
\end{equation*}
where each $P_k (p,q)$ is a homogeneous polynomial in $p$ and $q$ of degree $k$, with $P_4$ being defined as in \eqref{P4_exp}, and $P_3$ given as $P_3 = -H \widetilde{\Omega}_1 \,pq \left[p + (\beta_2-1)q\right]$. 
Unless $\widetilde{\Omega}_1 = 0$ (as in \S~\ref{sec:omega_1_is_0}), odd degree homogeneous polynomials $P_3$ and $P_1$ will be present in the curve expression. To determine its asymptotes, we make use of the result (Primrose~\cite{Primrose}, Theorem 2, pp.~7--8): {\it If $ap + bq$ is a simple factor of $P_4(p,q)$, i.e., if $P_4(p,q) = (ap+bq) \, Q (p,q)$ with $Q(b,-a)\neq 0$, then associated with this factor is the single asymptote to $P(p,q)=0$ defined by the equation
$(ap+bq) \,Q(b,-a) +P_{3}(b,-a)=0.$}
Both $p$ and $q$ are simple factors of $P_4(p,q)$, so each one will have associated an asymptote to the curve \eqref{curve}. Moreover, since $pq$ divides both $P_4$ and $P_3$, we have $P_3 (0,\bullet)=P_3(\bullet,0)=0$, and hence $p=0$ and $q=0$ are asymptotes to the curve \eqref{curve}. 
\begin{figure}[htbp]
\begin{center}
\includegraphics*[width=170pt]{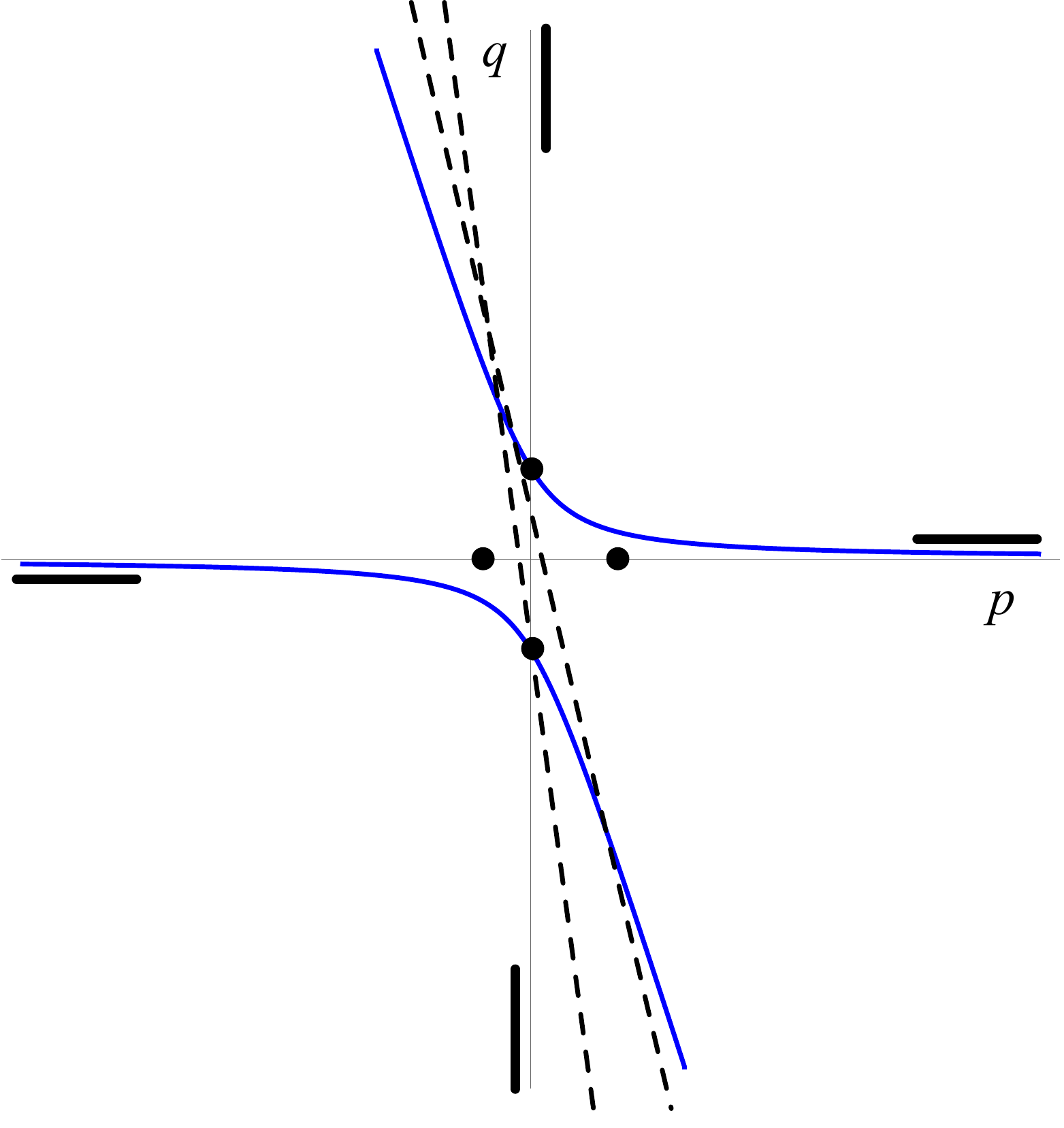}
\end{center}
\caption{Schematic diagram of the curve defined by Eq.~\eqref{curve} illustrating the way the curve tends to infinity, when approaching the asymptotes $p=0$ and $q=0$, and the points where it cuts the axes. Also included is the plot of the hyperbola \eqref{hyperbola_from_degeneracy} tangent to the curve at the $q$-intercepts. The branches of the curve connecting the asymptotes in the first and second (third and fourth) quadrants, are above (below) the hyperbola.
\label{curve_sketch}
}
\end{figure}


Although this result gives the asymptotes $p=0$ and $q=0$ quite easily, it does not tell us on which side of the asymptote the curve lies at each end. With regards to the horizontal asymptote $q=0$, it can be shown that, for sufficiently large values of $p$, the equation of the curve can be expressed in the form: 
$$
q= H/p + O(1/p^3). 
$$
Since $H>0$, when $p$ is large and positive, values of $q$ for the curve are greater than those on the asymptote. Hence the curve lies above the asymptote at that end. When $p$ is large and negative, the reverse situation occurs. 

For the asymptote $p=0$ we can consider, for sufficiently large values of $q$, the following expansion for the curve:
$$
p=(1/\rho-1)H/q+O(1/q^2).
$$
When $q$ is large and positive, the values of $p$ for the curve are greater than those on the asymptote. Hence the curve lies to the right of the asymptote at that end. The reverse situation occurs at the other end. 

It remains to show how the asymptotes are connected. Consider the hyperbola defined by 
\begin{equation}\label{hyperbola_from_degeneracy}
q^2 (1-\beta_2) - pq + H=0,
\end{equation}
appearing in the limit of \eqref{curve} when $\rho$ goes to zero (see Appendix B). When approaching the asymptote $q=0$, the hyperbola admits the following expansion: 
$$
q=H/p + H^2(1-\beta_2)/p^3 + O(1/p^4).
$$
We can then find the difference between the values of the ordinates for the hyperbola and the curve as: 
$$
q_\text{hyp}-q_\text{curve} = - \frac{\rho H^2 \beta_3}{H\beta_1 + \rho (\beta_2 -1)} \frac{1}{p^3} + O(1/p^4).
$$
When approaching the asymptote $q=0$ through large positive values of $p$, we infer that the values of $q$ for the hyperbola are less than those for the curve, and greater than those on the asymptote. 
When $p$ is large and negative, values of $q$ for the hyperbola are greater than those for the curve, and less than those on the asymptote.

In addition, it can be proved that, in a non-degenerate case, the hyperbola and the curve 
have two points in common, with coordinates $\left(0,\pm \sqrt{H/(\beta_2-1)}\right)$, at which they are tangent. 
If $p=0$ and $q=0$ are the only asymptotes, then since the curve never crosses the hyperbola 
we conclude that in the first quadrant the asymptotes $q=0$ and $p=0$ are connected through the curve entirely within the quadrant, similarly for the asymptotes in the third quadrant, 
{\it cf.} Fig.~\ref{plot_irrotational_lower_layer}. 
If extra asymptotes exist, it can be shown that their slopes, in absolute value, exceed $1/(\beta_2-1)$, which is the absolute value of the oblique asymptote for the hyperbola. As a consequence, the branches of the curve in second (fourth) quadrant connecting the asymptotes are above (below) the hyperbola. All these results are summarized in Fig.~\ref{curve_sketch}.

As a final remark we point out that any rearrangement of the curve branches connecting the asymptotes would lead to finite singularities at the $q$-intercepts. Although, this can never occur, since $P_p (0,\pm \sqrt{H/(\beta_2-1)})= P_q (0,\pm \sqrt{H/(\beta_2-1)})=\pm H (\rho-1) \sqrt{H/(\beta_2-1)} \neq 0$.

In conclusion, any line with with equation \eqref{relationship_p_q} intersects the curve with equation \eqref{curve} at least twice. As such, the wave speeds for the dispersion relation \eqref{dispersion_relation} cannot all be complex. 
%
%

\sectb{Appendix B. Degenerate cases}

The form in which we have presented the curve equation~\eqref{curve} suggests some limit cases to be considered: $\rho\rightarrow 0$, $\beta_3 \rightarrow 0$. 
An algebraic curve is said to be degenerate if it can split up into two or more curves. We will see below that this is precisely what happens in each one of the cases just mentioned. \\

\noindent {\bf Case $\bm{\rho \rightarrow 0.}$} 
In this case, the curve~\eqref{curve} can be factorized as
$$
\left( q^2 (1-\beta_2) - pq + H \right) \left( \beta_1 p^2 - \widetilde{\Omega}_1 p -1 \right)=0.
$$
We have the union between two vertical lines and a hyperbola. Stability holds for arbitrary wave number regardless of the physical parameters and the stability result for a homogeneous flow with constant vorticity is recovered.\\

\noindent {\bf Case $\bm{\beta_3 \rightarrow 0.}$}
This limit corresponds to letting $\alpha$ go to infinity. For this reason, it is convenient first to rewrite \eqref{curve} as 
$$
\frac{1}{\alpha} \left[ H (\beta_1 p^2-\widetilde{\Omega}_1 p -1) +\rho(\beta_2 p^2+ p(q-p) +H) \right] \, \frac{1}{\alpha} 
\left( \beta_2 \, q^2 - q (q-p) -H \right) = \rho  H^2 \, \csch^2 (H \alpha) \, \, p^2 q^2.
$$
Then, in the limit when $\alpha \rightarrow \infty$, we get 
$$
p^2 \, q^2 =0,
$$ 
whose intersection with any line with slope $1$ is composed by four points corresponding to two distinct real roots, each with multiplicity two. This confirms
that the instability band in the diagram of Fig.~\ref{stability_diagram_irrotational_lower_layer} is further reduced as $\alpha$ increases.

Similar considerations apply to the family of curves given by Eq.~\eqref{curve_gen_case}. In either case, there is another limit under which the curve becomes degenerate, consisting on letting $\rho$ go to one, from which we recover the results for a homogeneous piecewise linear shear current, covered in \S~\ref{sec:homogeneous_case}.

\sectc{Appendix C. Number of real roots of a quadratic form within a fixed interval}

Consider the polynomials $f(x)=a_0 x^2 + a_1 x+a_2$ ($a_0\neq0$) and $f_1(x)=f'(x)$. Let us seek the greatest common divisor of $f$ and $f_1$ with the help of Euclid's algorithm: 
$$
f= q_1 f_1 - f_2
$$
with $q_1 = (x + a_1/{2a_0})/2$ and $f_2 = (a_1^2-4 a_0 a_2)/{4a_0}$. 
The sequence $f$, $f_1$, $f_2$ is called the Sturm sequence of the polynomial $f$. The following Theorem by Sturm \cite{Prasolov} states the number of real roots of $f$ within a given interval $]a,b[$.

\begin{theo}
Let $\omega(x)$ be the number of sign changes in the sequence
$$
f(x), \quad f_1(x), \quad f_2.
$$
The number of the roots of $f$ (without taking multiplicities into account) confined between $a$ and $b$, where $f(a)\neq 0$, $f(b)\neq 0$ and $a<b$, is equal to $\omega(a)-\omega(b)$.
\end{theo}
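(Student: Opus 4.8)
The plan is the classical Sturm-sequence argument, specialised to the three-term sequence $f=f_0$, $f_1=f'$, $f_2=(a_1^2-4a_0a_2)/(4a_0)$ considered here. Write $D:=a_1^2-4a_0a_2$ for the discriminant of $f$, and treat first the main case $D\neq 0$, in which $f_2$ is a nonzero constant and $f$ has two distinct roots. I would verify the three standard properties of the sequence: (a) \emph{no two consecutive members have a common zero} --- $f_1$ and $f_2$ cannot, since $f_2$ is a nonzero constant, while if $f_0(x_0)=f_1(x_0)=0$ then the relation $f_0=q_1f_1-f_2$ would force $f_2=0$; (b) \emph{at the unique zero} $x_0=-a_1/(2a_0)$ \emph{of} $f_1$, a direct computation gives $f_0(x_0)=-D/(4a_0)=-f_2$, so $f_0(x_0)$ and $f_2$ have opposite signs; (c) \emph{at a zero} $x_0$ \emph{of} $f_0$, the root is simple (because $D\neq 0$) and $f_0'=f_1$, so near $x_0$ one has $f_0(x)\approx f_1(x_0)(x-x_0)$ with $f_1$ of constant nonzero sign, whence the pair $(f_0,f_1)$ passes from exactly one sign change to none as $x$ increases through $x_0$.

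I would then track $\omega(x)$ as $x$ runs over $]a,b[$. Since $f(a)\neq 0\neq f(b)$ and $f_2\neq 0$, both $\omega(a)$ and $\omega(b)$ are well defined (a vanishing middle term is simply skipped). Away from the zeros of $f_0$ and of $f_1$ every term has locally constant sign, so $\omega$ is locally constant there and can change only as $x$ crosses such a zero. Crossing a zero of $f_0$: by (a) $f_1$ is nonzero there, so the gap $(f_1,f_2)$ contributes the same on both sides, while by (c) the gap $(f_0,f_1)$ loses exactly one sign change --- hence $\omega$ drops by exactly $1$. Crossing the zero of $f_1$ (necessarily with $f_0\neq 0$ there, by (a)): by (b) $f_0$ and $f_2$ have opposite signs, so among the two gaps $(f_0,f_1)$ and $(f_1,f_2)$ exactly one carries a sign change both just before and just after the crossing, whatever the momentary sign of $f_1$ --- hence $\omega$ is unchanged. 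Since $f_2$ never vanishes, these exhaust the possibilities, and summing the local contributions gives $\omega(a)-\omega(b)$ equal to the number of real zeros of $f$ strictly between $a$ and $b$, each counted once; as $D\neq 0$ those zeros are simple, so this is the number of distinct real roots of $f$ in $]a,b[$.

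It remains to dispatch the degenerate case $D=0$, where $f_2\equiv 0$ and $f=a_0(x-r)^2$: here the two-term sequence $f,f_1$ has sign pattern $(\operatorname{sgn}a_0,-\operatorname{sgn}a_0)$ to the left of $r$ and $(\operatorname{sgn}a_0,\operatorname{sgn}a_0)$ to its right, so $\omega$ drops by $1$ across $r$ and by nothing elsewhere, matching the single distinct root. A consistency check in all cases is to let $a\to-\infty$ and $b\to+\infty$: the sign patterns of $(f_0,f_1,f_2)$ at $\pm\infty$ give $\omega(-\infty)-\omega(+\infty)=2$ when $D>0$ and $=0$ when $D<0$. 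There is no genuine obstacle; the content is the sign bookkeeping above, the only delicate points being that $\omega(a),\omega(b)$ stay well defined and the local increments hold even if $a$ or $b$ coincides with the zero of $f_1$ (both secured by $f(a)\neq 0\neq f(b)$ and property (b)), and the degenerate case $D=0$, where property (a) fails but the conclusion survives by the explicit computation just given.
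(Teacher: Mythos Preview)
Your proof is correct: it is the standard Sturm-chain argument, specialised to the three-term sequence for a quadratic, and you handle both the generic case $D\neq 0$ and the degenerate case $D=0$ cleanly. The bookkeeping in properties (a)--(c) and the tracking of $\omega$ across zeros of $f_0$ and $f_1$ are all accurate.

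There is, however, nothing to compare against: the paper does not prove this theorem. It is stated in Appendix~C as a classical result of Sturm, with a citation to Prasolov, and is invoked solely as a black box in the proof of the subsequent corollary. Your argument therefore supplies what the paper simply quotes.
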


\begin{cor}\label{no_roots_interval}
Suppose that $a_1^2-4 a_0 a_2>0$, $f(0)>0$ and $f(1)>0$, and the coefficients $a_0$ and $a_1$ are such that $a_0>0$ implies $a_1>0$. Then $f$ has no roots within the interval $]0,1[$.
\end{cor}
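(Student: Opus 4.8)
The plan is to apply Sturm's Theorem directly to the Sturm sequence $f(x),\,f_1(x),\,f_2$ constructed above, and to show that the number of sign changes $\omega$ is the same at the two endpoints $x=0$ and $x=1$; since $f(0)\neq 0$ and $f(1)\neq 0$ by hypothesis, Sturm's Theorem then gives that the number of roots of $f$ in $]0,1[$ equals $\omega(0)-\omega(1)=0$.

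First I would tabulate the signs of the three entries at the endpoints. At $x=0$ we have $f(0)=a_2$, $f_1(0)=a_1$, and $f_2=(a_1^2-4a_0a_2)/(4a_0)$; at $x=1$ we have $f(1)=a_0+a_1+a_2$, $f_1(1)=2a_0+a_1$, and again the constant $f_2$. The hypotheses give $f(0)>0$ and $f(1)>0$, and since $a_1^2-4a_0a_2>0$ the sign of $f_2$ is exactly the sign of $a_0$ (in particular $f_2\neq 0$, so the sequence is a bona fide Sturm sequence for the squarefree quadratic $f$). Then I would split into the cases $a_0>0$ and $a_0<0$. If $a_0>0$, the standing hypothesis forces $a_1>0$, hence also $f_1(1)=2a_0+a_1>0$ and $f_2>0$, so the sequence reads $(+,+,+)$ at both $x=0$ and $x=1$, giving $\omega(0)=\omega(1)=0$. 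If $a_0<0$, then $f_2<0$ while $f(0)=a_2>0$ and $f(1)>0$; regardless of the sign of $a_1$ (respectively of $2a_0+a_1$), a sequence of the form $(+,\pm,-)$ has exactly one sign change, so $\omega(0)=\omega(1)=1$. In either case $\omega(0)-\omega(1)=0$, which is the claim.

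The one delicate point is the possible vanishing of the middle entry $f_1$ at an endpoint, namely when $a_1=0$ (at $x=0$) or $2a_0+a_1=0$ (at $x=1$); here I would invoke the usual convention that an interior zero of a Sturm sequence is deleted before counting sign changes. Since $f_2$ has a fixed nonzero sign and $f(0),f(1)>0$, deleting the middle term leaves $(+,-)$ in the case $a_0<0$ and $(+,+)$ in the case $a_0>0$, so the counts above are unaffected. This bookkeeping is the only place where care is needed; everything else is immediate once the signs are recorded.

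For completeness I might add the elementary argument that bypasses Sturm: the discriminant hypothesis says $f$ has two distinct real roots, and a root of $f$ in $]0,1[$ together with $f(0),f(1)>0$ would force \emph{both} roots into $]0,1[$ with $a_0>0$, placing the vertex abscissa $-a_1/(2a_0)$ in $]0,1[$ and hence $a_1<0$, contradicting the hypothesis; while for $a_0<0$ a downward-opening parabola that is positive at $0$ and at $1$ is positive on all of $[0,1]$, so no root lies in $]0,1[$.
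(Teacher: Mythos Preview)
Your proof is correct and follows essentially the same approach as the paper: apply Sturm's Theorem, split into the cases $a_0>0$ and $a_0<0$, and verify $\omega(0)=\omega(1)$ in each case by tabulating signs. You are in fact slightly more careful than the paper, which does not explicitly address the possible vanishing of $f_1$ at an endpoint; your handling of that boundary case and your added elementary parabola argument are welcome but not required.
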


\begin{proof}
To determine the values $\omega(0)$ and $\omega(1)$ we consider the sequences 
$$
f(0)=a_2, \quad f_1(0) = a_1, \quad f_2,
$$
and 
$$
f(1)=a_0+a_1+a_2, \quad f_1(1)=2a_0+a_1, \quad f_2.
$$
If $a_0>0$, then by hypothesis $a_1>0$ and $f_2>0$. Hence, $\omega(0)=0=\omega(1)$, and no roots can be found between $]0,1[$. 
If $a_0<0$, then $\omega(0)=1$, regardless of the sign of $a_1$. Similarly, $a_0<0$ implies $\omega(1)=1.$ Therefore, the number of real roots of $f$ within $]0,1[$ is given by $\omega(0)-\omega(1)=0$.  
\end{proof} 

\sectd{Appendix D. Finite singular points for the curve \eqref{curve_gen_case}}

To investigate the existence of finite singular points for the family of curves given by \eqref{curve_gen_case}, again we adopt the system of equations \eqref{cand_singular_pts}: 
\begin{equation*}\label{cand_singular_pts_gen}
2 P_0 P_{4,p} - P_2 P_{2,p} =0, \quad 2 P_0 P_{4,q} - P_2 P_{2,q} =0.
\end{equation*}
Using the same strategy as before, we introduce the variable $v=p/q$ and seek common roots for two cubics:
\begin{equation}\label{cubic_one}
2 (\lambda + H \beta_1 + \rho(\beta_2-1))\, v^3 -3 (\lambda + 1-2\rho) (\lambda + H \beta_1 + \rho(\beta_2-1)) \, v^2 + \gamma_{1,1} \, v + \gamma_{1,2} =0, 
\end{equation}
\begin{equation}\label{cubic_two}
(\lambda+1-2\rho) (\lambda + H \beta_1 + \rho(\beta_2-1)) \, v^3 + \gamma_{2,1} \, v^2 + 3(\rho-1) (\beta_2-1) (\lambda + 1-2\rho) \, v + \gamma_{2,2}=0,
\end{equation}
with coefficients $\gamma_{i,k}$ ($i,k=1,2$) dependent on the physical parameters, which for simplicity will be omitted here.
The resultant of \eqref{cubic_one} and \eqref{cubic_two} is given by: 
\begin{equation}\label{resultant_gen_case}
-64 (\beta_2-1)^2 \beta_3^2 (\rho-1)^4 \rho^2 (\lambda + H \beta_1 + \rho (\beta_2-1))^2 G(\rho, \alpha, H,\lambda),
\end{equation}
where $G$ is defined as a quartic for $\lambda$:
\begin{equation}\label{quartic_for_lambda}
G = \lambda^4 +e_1 \lambda^3+e_2 \lambda^2 +e_3 \lambda +e_4,
\end{equation}
with coefficients $e_i$ ($i=1,2,3,4$) listed below: 
\begin{flalign*}
\phantom{e_1}e_1 = -4\Big( 1-2 \beta_2 + 2 \beta_2 \rho \Big),&&
\end{flalign*}
\begin{multline*}
 e_2 = -2\Big(-3+ 4 H \beta_1 + 8 \beta_2 - 4H \beta_1 \beta_2 -8 \beta_2^2) + (-4H \beta_1 + 4 H \beta_1 \beta_2 + 12 \beta_2^2 - 4 \beta_3) \rho + \\
 +(-8\beta_2 - 4\beta_2^2 + 4 \beta_3) \rho^2 \Big), 
\end{multline*}
\begin{multline*}
e_3 = 4 \Big(  -1 + 4 H \beta_1 + 2 \beta_2 -12 H \beta_1 \beta_2 + 8H \beta_1 \beta_2^2 + \\
+(-4H\beta_1+6 \beta_2 +20 H \beta_1 \beta_2 - 20\beta_2^2 -16 H \beta_1 \beta_2^2 + 8 \beta_2^3 + 12 \beta_3 -8 \beta_2 \beta_3) \,\rho +\\
+ \left(-8\beta_2 - 8 H \beta_1 \beta_2 +36 \beta_2^2 + 8H \beta_1 \beta_2^2 -16 \beta_2^3 -28 \beta_3 + 16 \beta_2 \beta_3\right) \rho^2 + \\
+\left(-16 \beta_2^2 + 8 \beta_2^3 +16 \beta_3 - 8 \beta_2 \beta_3 \right) \rho^3 \Big),
\end{multline*}
\begin{multline*}
e_4 = 1-8 H \beta_1 + 16 H^2 \beta_1^2 + 8 H \beta_1 \beta_2 - 32 H^2 \beta_1^2 \beta_2 + 16 H^2 \beta_1^2 \beta_2^2 + \\
+ \Big( 8H \beta_1 -32 H^2 \beta_1^2 -16 \beta_2 +56 H \beta_1 \beta_2 + 64 H^2 \beta_1^2 \beta_2 + 8 \beta_2^2 -96 H \beta_1 \beta_2^2 - 32 H^2 \beta_1^2 \beta_2^2 + \\
+ 32 H \beta_1 \beta_2^3 + 8 \beta_3 + 32 H \beta_1 \beta_3 - 32 H \beta_1 \beta_2 \beta_3 \Big) \rho +\\
+ \Big( 16 H^2 \beta_1^2 + 16 \beta_2 - 128 H \beta_1 \beta_2 - 32 H^2 \beta_1^2 \beta_2 + 56 \beta_2^2 + 192 H \beta_1 \beta_2^2 + 16 H^2 \beta_1^2 \beta_2^2 - 64 \beta_2^3 - \\
- 64 H \beta_1 \beta_2^3 + 16 \beta_2^4 - 72 \beta_3 - 64 H \beta_1 \beta_3 + 64 \beta_2 \beta_3 + 64 H \beta_1 \beta_2 \beta_3 - 32 \beta_2^2 \beta_3 + 16 \beta_3^2 \Big) \rho^2 + \\
+32 \Big( 2 H \beta_1 \beta_2 - 4 \beta_2^2 -3 H \beta_1 \beta_2^2 + 4 \beta_2^3 + H \beta_1 \beta_2^3 - \beta_2^4 + 4 \beta_3 + \\
+H \beta_1 \beta_3 - 4 \beta_2 \beta_3 - H \beta_1 \beta_2 \beta_3 + 2 \beta_2^2 \beta_3 - \beta_3^2 \Big) \rho^3 
+ 16 \left(\beta_2^2 - \beta_3 ) (4 - 4 \beta_2 + \beta_2^2 - \beta_3 \right) \rho^4. 
\end{multline*}

We now show that $G\geqslant 0$, and $G= 0$ only if $\lambda = 2 \rho-1$. Following the same notation as Fuller~\cite{Fuller,Barros_Choi_2009} we consider the inner determinants $\Delta_7$, $\Delta_5$, and $\Delta_3$ for this quartic $G$ in \eqref{quartic_for_lambda}:
\begin{align*}
\Delta_7 &= 16777216 (\beta_2-1)^2 \beta_3^2 (\rho-1)^6 \rho^2 \,{T_1}^2(\rho, \alpha, H) \,T_2 (\rho, \alpha, H), \\
\Delta_5 &= 16384 \rho (\rho-1)^3 (\beta_2-1)\beta_3 \, (\kappa_0 \rho^2 + \kappa_1 \rho + \kappa_2 ),\\
\Delta_3 &= 64 (\rho-1) \Big((\beta_2-1) \left(H\beta_1 + \beta_2 (-1+2\rho)+2\rho \right) + \rho \beta_3 \Big),
\end{align*}
with $\kappa_i$ ($i=0,1,2$) dependent on $\rho$, $H$, and $\alpha$, and where the terms $T_1$ and $T_2$ are defined by: 
\begin{align*}
T_1 &= (H \beta_1-1) (\beta_2 -1) +\rho (\beta_2^2 -\beta_3 -1), \\
T_2 &= (\beta_2 -H\beta_1)^2 + 4\rho (H\beta_1 \beta_2 - \beta_2^2 + \beta_3) + 4 \rho^2 (\beta_2^2 - \beta_3). 
\end{align*}
By observing that $H\beta_1 \beta_2 - \beta_2^2 + \beta_3 = (H \alpha)^2 ( \coth \alpha \,\coth(H\alpha) - 1)$ ($>0$), we can conclude that $T_2>0$, and therefore $\Delta_7 \geqslant 0$. Extensive numerical tests show that $\Delta_3$ and $\Delta_5$ can never be simultaneously positive. So, whenever $\Delta_7>0$ we have $\Delta_3<0$, or $\Delta_5<0$, thus $G$ defined in \eqref{quartic_for_lambda} has 4 complex roots ({\it cf.} Theorem 4 in \cite{Fuller}). Since $\lambda \in \R$, we conclude that $G$ is positive. 
A simple way to arrive to same conclusion, without resorting to extensive numerical tests, is by observing that $\Delta_3 <0$ for relevant physical regimes. Namely, it follows from definition
that $\Delta_3<0$, when $\rho>0.5$. 

When $\Delta_7=0$, then $T_1=0$. In this case, it can be shown that the quartic $G$ in \eqref{quartic_for_lambda} can be factorized as the product of two quadratics in $\lambda$. One of the terms can never vanish as it has only complex roots. The other can vanish provided $\lambda = 2\rho-1$. Nevertheless, the latter scenario corresponds to the case when the cubic equation \eqref{cubic_two} becomes degenerate, namely:
$$
\rho (1-\rho) \beta_3 \,v^2 +(\beta_2-1)^2 (1-\rho)^2 =0, 
$$
which has complex roots. Therefore, even though the polynomials in \eqref{cand_singular_pts} have a common root, it is necessarily a complex one. 

It remains one last case to be examined under which the resultant in \eqref{resultant_gen_case} can vanish. This corresponds to the case when 
$\lambda = -H\beta_1 -\rho(\beta_2-1)$. As it can be seen from \eqref{cubic_one} and \eqref{cubic_two}, the two cubic equations become degenerate, of degree 1 and 2, respectively. This can be used to assert that a common root exists for the polynomial equations only when $8 \rho (\rho-1)(\beta_2-1)^2 \beta_3=0$, thus corresponding to a degenerate case (see Appendix B).  

In summary, provided that $\alpha$ and $H$ are strictly positive, and $0<\rho<1$, there are no finite singular points for the family of curves defined by \eqref{curve_gen_case}.

\bigskip

\noindent{\bf Homogeneous case.} The analysis of the finite singular points obtained for \eqref{curve_gen_case} in the limit when $\rho\rightarrow 1$ is much simpler than the one presented above. The algebraic curve corresponding to our stability problem is described by the cubic curve \eqref{hom_case}, which can be written in homogeneous coordinates: 
$$
P(p,q,z)=P_3(p,q)+z^2 P_1(p,q)=0
$$
whose singular points (finite, or at infinity) are the solutions of $P_p=0$, $P_q=0$, $P_z=0$.
For finite singular points, we can assume without loss of generality that $z=1$ and solve:
\begin{eqnarray*}
P_{3,p}+P_{1,p}=0,\\
P_{3,q}+P_{1,q}=0,\\
P_1=0.
\end{eqnarray*}
If $\lambda=1$, then two singular points of the form $(0,\pm q_0,1)$ may or not exist, depending on the parameters considered. For this particular value of $\lambda$ the cubic becomes degenerate and splits up into a vertical line $p=0$ and a hyperbola, for which stability always holds. If $\lambda \neq 1$, then $P_1=0$ can be used to write $q=-(a_{1,0}/a_{0,1}) p$ that can be inserted into the first two equations to yield:
$$
(\lambda + H \beta_1 + \beta_2-1) (D_1 p^2 + E_1) =0,
$$
$$
D_2 p^2 + E_2 = 0.
$$
For the purpose of seeking finite singular points, one may discard the case when $\lambda + H \beta_1 + \beta_2-1=0$, since it only gives singularities at infinity. Other candidates to finite singular points must satisfy $D_1 E_2 =D_2 E_1$, {\it i.e,} 
$$ 
3 H \beta_3 (\lambda-1)^2 (\lambda + H \beta_1 + \beta_2-1) =0,
$$
which never occurs. Hence, unless $\lambda=1$, the cubic curve has no finite singular points.

\secte{Appendix E. Further details on the family of curves \eqref{curve_gen_case}}

We have seen throughout the text that not all transitions between curve configurations involve a change of the topological structure. If, in this process, different stability properties of the flow are obtained, then some caution is needed in the analysis, since we cannot rely on the singularities to detect such transitions. 
Consider the following example:
\begin{figure}[htbp]
\begin{center}
\includegraphics*[width=340pt]{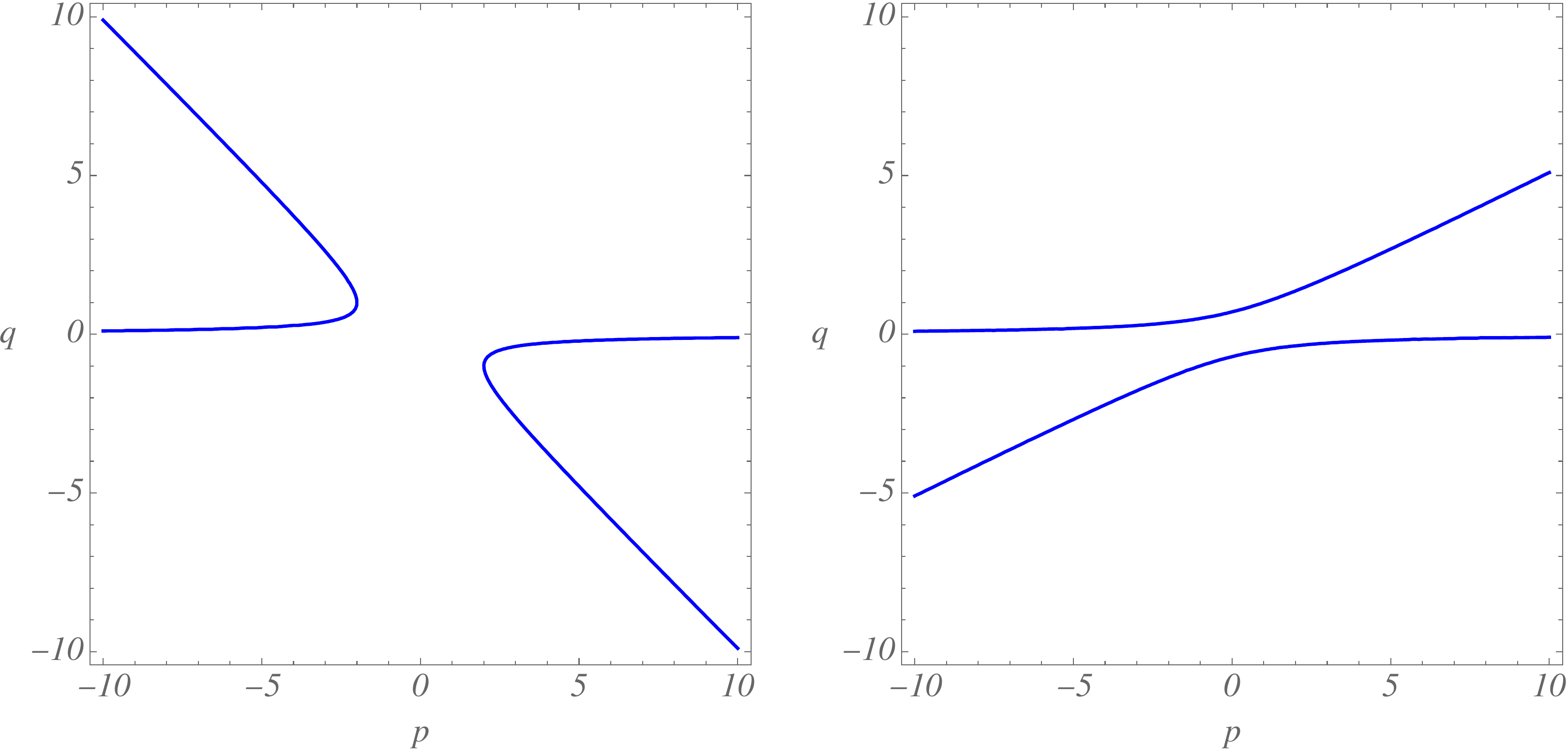}
\\
$(a)$ \hspace{4cm} $(b)$
\end{center}
\caption{Plots on the $(p, q)$-plane of the curve $P(p,q) \equiv q (ap+bq)-\epsilon=0$, with $\epsilon=-1$ and $a=1$. $(a)$ $b =1$. $(b)$ $b =-2$. The change between an unstable and a stable configuration occur without passing through a singularity. 
\label{example}
}
\end{figure}

%

\begin{exe}
Consider the family of quadratic curves defined by $P(p,q) \equiv q (ap+bq)-\epsilon=0$, with real parameters $a$, $b$, and $\epsilon$. Suppose we want to determine conditions on the parameters for which any line with slope 1 always cuts the curve at two points. Let us denominate the curves with this property as being ``stable'', and ``unstable'' otherwise. By substituting $q=p+F$ into the curve equation, one obtains a quadratic equation for $p$, which has two distinct real roots provided the discriminant is positive. 
Since the discriminant is given by $a^2 F^2 + 4 (a+b)\epsilon$, clearly the desired condition is simply $(a+b)\epsilon >0$. 
\end{exe}
In the example, for given fixed values of $a$ and $b$, the family of curves changes from a ``stable'' to an ``unstable'' configuration through a (finite) singularity ($\epsilon=0$). 
Notice, however, that we may also fix $\epsilon$ and $a$, and vary the values of $b$ to switch from a stable to an unstable configuration without passing through a singularity (see Fig.~\ref{example}), which should be used as a cautionary tale for our work. 


Although numerous configurations for the family of curves defined by Eq.~\eqref{curve_gen_case} can be found, based on its singularities, a full classification of the curve configurations will not be pursued here. Instead, we will focus on the short wave regime ($\alpha\gg 1$) and show the existence of solely two distinct configurations, albeit being topologically equivalent.  

For convenience, let us write the top degree form of $P(p,q)$ as
$$
P_4(p,q) = pq \left(a_{31} \,p^2 + a_{22} \,pq + a_{13} \,q^2\right),
$$
where the coefficients can be read from \eqref{P4_gen_case}. Our family of curves has a few things in common with the family with Eq.~\eqref{curve} described in Appendix A. 
First, curves with Eq.~\eqref{curve_gen_case} share the asymptotes $q=0$ and $p=0$, regardless of the physical parameters specified. Second, the way these curves approach the horizontal asymptote $q=0$ is the same as described for the Eq.~\eqref{curve}, since the curves admit the expansion $q= H/p + O(1/p^3)$, for large values of $p$.
%
Third, the hyperbola \eqref{hyperbola_from_degeneracy} is also tangent to the curves with Eq.~\eqref{curve_gen_case} at the $q$-intercepts, and is never crossed by the curve. 
Following the same steps as in Appendix A, we find the difference between the values of the ordinates for the hyperbola and the curve: 
$$
q_\text{hyp}-q_\text{curve} = - \frac{\rho H^2 \beta_3}{a_{31}} \frac{1}{p^3} + O(1/p^4).
$$
If $\lambda > -H$, or the value of $\alpha$ is large enough, then $a_{31}>0$. In such cases, for large positive values of $p$, values of $q$ for the hyperbola are less than those for the curve, and greater than those on the asymptote. When $p$ is large and negative, values of $q$ for the hyperbola are greater than those for the curve, and less than those on the asymptote.

Contrary to what was described for the curves with Eq.~\eqref{curve} the number of intercepts may vary. The $q$-intercepts are still the same, with coordinates $(0,\pm \sqrt{H/(\beta_2-1)})$, but curves with Eq.~\eqref{curve_gen_case} do not always cut the $p$-axis. For that to happen, $a_{31}$ must be positive, in which case we have the points $(\pm \sqrt{H(1-\rho)/a_{31}},0)$. 

\begin{figure}[htbp]
\begin{center}
\includegraphics*[width=340pt]{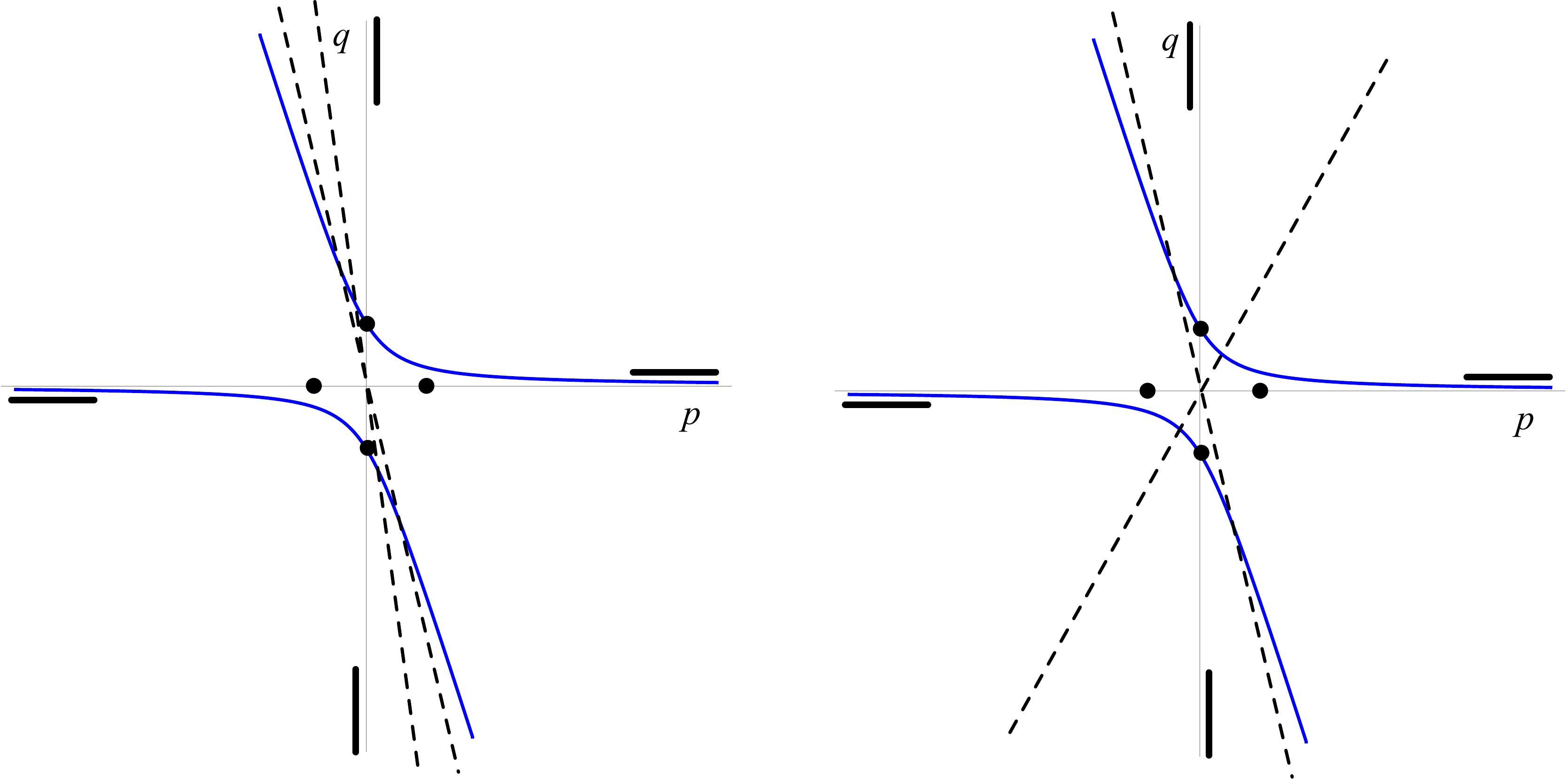}
\\
$(a)$ \hspace{4cm} $(b)$
\end{center}
\caption{Similar to Fig.~\ref{curve_sketch}. Schematic diagrams of the curve defined by Eq.~\eqref{curve_gen_case} for large values of $\alpha$. $(a)$ $\lambda<\rho$. $(b)$ $\lambda>\rho$. 
\label{configs_large_alpha}
}
\end{figure}


We will focus on short waves ($\alpha\gg 1$) and show that these are always unstable. When the values of $\alpha$ are sufficiently large, two subregions in the parameter space can be distinguished, according to: $(i)$ $\lambda>\rho$; $(ii)$ $\lambda<\rho$. In the former case, coefficients $a_{31}$ and $a_{22}$ are positive, while $a_{13}<0$. Slopes for the two oblique asymptotes have opposite signs. In the latter case, all coefficients are positive, which means the two oblique asymptotes have negative slopes (see Fig.~\ref{configs_large_alpha}). In either case, the branches of the curve in second (fourth) quadrant are above (below) the hyperbola. 
When $\lambda<\rho$, the schematic diagram is similar to the one in Fig.~\ref{curve_sketch} and we find that in the first quadrant the asymptotes $q=0$ and $p=0$ are connected through the curve entirely within the quadrant. The oblique asymptotes in the second quadrant are connected through the curve entirely within the quadrant. 
On the other hand, when $\lambda>\rho$, we see a rearrangement of the curve branches. More precisely, in the first quadrant the asymptotes $q=0$ and the oblique one are connected through the curve entirely within the quadrant. In the second quadrant, the asymptotes $p=0$ and the oblique one are connected through the curve entirely within the quadrant. 
The behavior of the curve within the third and fourth quadrants follows by symmetry. 

In the two cases, the hyperbola separates the inner component (a closed contour containing the four intercepts) from the outer components of the curve. As a consequence, at least for intermediate values of $F$, there will be instances when a line of equation $q=p+F$ intersects the curve only twice, and instability of the flow holds.


As a final remark, and relevant to \S~\ref{sec:uniform_vorticity}, it will pointed out that when $\lambda>\rho$ all the features discussed here for the case when $\alpha\gg 1$ hold for any strictly positive value of $\alpha$, even if arbitrarily small. As a consequence, one single configuration exists for the family of curves \eqref{curve_gen_case} when $\lambda>\rho$.

{\small

}

\end{document}